\newcommand\COMP{\hbox{C\kern -.58em {\raise .54ex \hbox{$\scriptscriptstyle |$}}
\kern-.55em {\raise .53ex \hbox{$\scriptscriptstyle |$}} }}
\newcommand\NN{\hbox{I\kern-.2em\hbox{N}}}
\newcommand\RR{\hbox{I\kern-.2em\hbox{R}}}
\newcommand\sRR{{\it \hbox{I\kern-.2em\hbox{R}}}}
\newcommand\QQ{\hbox{I\kern-.53em\hbox{Q}}}
\newcommand\PP{\hbox{I\kern-.53em\hbox{P}}}
\newcommand\EE{\hbox{I\kern-.53em\hbox{E}}}
\newcommand\ZZ{{{\rm Z}\kern-.28em{\rm Z}}}
\newcommand\be{\begin{equation}}
\newcommand\ee{\end{equation}}
\newtheorem{theorem}{Theorem}[section]
\newtheorem{remark}[theorem]{Remark}
\newtheorem{lemma}[theorem]{Lemma}
\newtheorem{definition}[theorem]{Definition}
\newcommand*\bigcdot{\mathpalette\bigcdot@{.5}}
\newcommand*\bigcdot@[2]{\mathbin{\vcenter{\hbox{\scalebox{#2}{$\m@th#1\bullet$}}}}}
\numberwithin{equation}{section}
\begin{document}

\title{log-optimal portfolio without NFLVR:  existence, complete characterization,  and duality\thanks{This research is supported by  the Natural Sciences
and Engineering Research Council of Canada, through Grant RES0020459}}

\author{Tahir Choulli and Sina Yansori}

\affil{\small{Department of Mathematical and Statistical Sciences, University of Alberta, Edmonton, Canada}}


\maketitle

\begin{abstract}
This paper addresses the log-optimal portfolio for a general semimartingale model.  The most advanced literature on the topic elaborates  existence and characterization of this portfolio under no-free-lunch-with-vanishing-risk assumption (NFLVR). There are many financial models violating NFLVR, while admitting the log-optimal portfolio on the one hand. On the other hand, for financial markets  under progressively enlargement of filtration, NFLVR remains completely an open issue, and hence the literature can be applied to these models. Herein, we provide a complete characterization of log-optimal portfolio and its associated optimal deflator, necessary and sufficient conditions for their existence, and we elaborate their duality as well  without NFLVR.  \end{abstract}

\section{Introduction}

Since the seminal papers of Merton \cite{merton71,merton73}), the theory of utility maximization and optimal portfolio has been developed successfully in many directions and in different frameworks. These achievements can be found in \cite{Karatzas,KW99,CSW,KZ}, and the references therein to cite few. Besides the Markowitz' portfolio, thanks to the nice properties of the logarithm utility, the log-optimal portfolio draw tremendous attention since a while. This resulted in a large literature on the topic for different level of generalities. The most advanced of this literature, under the assumption of no-free-lunch-with-vanishing-risk assumption (NFLVR herefater), provides explicit characterization for this optimal portfolio for the general semimartingale market models, see \cite{ChristensenLarsen2007,GollKallsen, HulleySchweizer} and the references therein to cite few. However, there are many financial models that violates NFLVR, while they might admit the log-optimal portfolio, see \cite{ChoulliDengMa,Lowenstein,Ruf}, and hence these assert somehow that NFLVR might be too strong for a financial market model to be ``acceptable and worthy". For market models under progressive enlargement of filtration, which incorporate the two important settings of credit risk and life insurance, NFLVR remains an open issue, and hence the existing literature is not applicable to these models on the one hand. On the other hand, for these latter market models,  the no-unbounded-profit-with-bounded-risk received full attention as it is the minimal no-arbitrage condition for a market model to be financial and quantitatively ``acceptable" and viable, see \cite{ACDJ1, ACDJ2} and the references therein. Furthermore, recently there has been an interest to extend the existing results on the optimal portfolio and utility maximization without NFLVR, see \cite{Fontana}. 

This paper contains four sections including the current one. Section  \ref{section2} presents the mathematical model and notation that we work with, and provides its preliminary analysis.  Section \ref{section3} elaborates the main result of the paper and discuss its relationship to the literature, while Section \ref{section4} proves the main theorem.  The paper contains an appendix where some proofs are relegated and  some useful existing results are recalled.


\section{The mathematical model, notations and preliminaries}\label{section2} 
Throughout the paper, we consider a filtered probability space $\left(\Omega, {\cal F}, \mathbb H:=({\cal H}_t)_{t\geq 0},P\right)$ satisfying the usual conditions of right continuity and completeness.  On this stochastic basis, we suppose given a $d$-dimensional semimartingale, $X$, that represents the discounted price process of $d$ risky assets. Throughout the paper, the set ${\cal M}( Q)$ denotes the set of all  martingales under $Q$, while ${\cal A}( Q)$ (respectively ${\cal A}^+(Q)$) denotes the set of all  optional processes with integrable variation (respectively nondecreasing and integrable) under $Q$. When $Q=P$, we simply omit the probability for the sake of simple notations.  For a semimartingale $Y$, by $L(Y)$ we denote the set of predictable processes that are $X$-integrable in the semimartingale sense.  For $\varphi\in L(Y)$, the resulting integral of $\varphi$ with respect to $X$ is denoted by $\varphi\cdot Y$. For any local martingale $M$, we denote by $L^1_{loc}(M)$ the set $\mathbb H$-predictable processes $\varphi$ that are $Y$-integrable and the resulting integral $\varphi\cdot M$ is a local martingale. If ${\cal C}$ is the set of processes, then ${\cal C}_{loc}$ is the set of processes, $Y$, for which there exists a sequence of stopping times, $(T_n)_{n\geq 1}$, that increases to infinity and $Y^{T_n}$ belongs to ${\cal C}$, for each $n\geq 1$. For any semimartinagle, $L$, we denote by ${\cal E}(L)$ the Doleans-Dade (stochastic) exponential, it is the unique solution to the stochastic differential equation 
$$dY=Y_{-}dL,\quad X_0=1,\quad\mbox{and is given by}\quad {\cal E}_t(L)=\exp(L_t-{1\over{2}}\langle L^c\rangle_t)\prod_{0<s\leq t}(1+\Delta L_s)e^{-\Delta L_s}.$$
In the following, we recall the predictable characteristics of $X$ that will play important role in the results' statement and their proofs as well. This requires some definitions and notations that we start introducing. On $\Omega\times[0,+\infty)\times{\mathbb R}^d$,   we consider  
$$\label{sigmaFields}
\widetilde {\cal O}(\mathbb H):={\cal O}(\mathbb H)\otimes {\cal
B}({\mathbb R}^d),\ \ \ \ \ \widetilde{\cal P}(\mathbb H):= {\cal
P}(\mathbb H)\otimes {\cal B}({\mathbb R}^d),
$$
where ${\cal B}({\mathbb R}^d)$ is the Borel $\sigma$-field on
${\mathbb R}^d$, the optional and predictable $\sigma$-fields respectively.  To $X$, we associate the optional random measure $\mu$
defined by
\begin{eqnarray*}\label{mesuresauts}
\mu(dt,dx):=\sum_{u>0} I_{\{\Delta X_u \neq 0\}}\delta_{(u,\Delta
X_u)}(dt,dx)\,.\end{eqnarray*}
For a product-measurable functional
$W\geq 0$ on $\Omega\times \mathbb R_+\times{\mathbb R}^d$, we
denote $W\star\mu$ (or sometimes, with abuse of notation,
$W(x)\star\mu$) the process
\begin{eqnarray*}\label{Wstarmu}
(W\star\mu)_t:=\int_0^t \int_{{\mathbb R}^d\setminus\{0\}}
W(u,x)\mu(du,dx)=\sum_{0<u\leq t} W(u,\Delta X_u) I_{\{ \Delta
X_u\not=0\}}.\end{eqnarray*}

We define,   on
$\Omega\times\mathbb R_+\times{\mathbb R}^d$, the
measure $M^P_{\mu}:=P\otimes\mu$ by $$M^P_{\mu}\left(W\right):=\int W
dM^P_{\mu}:=E\left[(W\star\mu)_\infty\right],$$ (when the expectation is well defined). The \emph{conditional expectation} given $
\widetilde{\cal P}(\mathbb H)$ of a product-measurable functional
$W$, denoted by $M^P_{\mu}(W|\widetilde{\cal P}(\mathbb H))$, is the unique $ \widetilde{\cal P}(\mathbb H)$-measurable
functional $\widetilde W$ satisfying
$$
E\left[(W I_{\Sigma}\star\mu)_\infty \right]=E\left[({\widetilde W}
I_{\Sigma}\star\mu)_\infty \right]\ \ \ \mbox{for all}\
\Sigma\in\widetilde{\cal P}(\mathbb H).$$
For the reader's convenience, we recall {\it the
 canonical decomposition} of $X$ (for more related details, we refer the reader to \cite[Theorem 2.34, Section II.2]{JS03})
\begin{equation}\label{modelSbis}
X=X_0+X^c+h\star(\mu-\nu)+b\cdot A+(x-h)\star\mu,\end{equation}
where $h$, defined as $h(x):=xI_{\{ \vert x\vert\leq 1\}}$, is the
truncation function, and \mbox{$h\star(\mu-\nu)$} is the unique pure jump $\mathbb H$-local martingale with jumps given by $h(\Delta X)I_{\{\Delta X\not=0\}}$. For the matrix $C$ with entries $C^{ij}:=\langle X^{c,i},
X^{c,j}\rangle $, and $\nu$, we can find a version satisfying
$$  C=c\cdot A,\ \nu(d t,\ d x)=d A_t F_t(d x),\
F_t(\{0\})=0,\ \displaystyle{\int} (\vert
x\vert^2\wedge 1)F_t(d x)\leq 1.
$$ Here $A$ is increasing and continuous due to the quasi-left-continuity of $X$, $b$
and $c$ are predictable processes,
 $F_t(d x)$ is a predictable kernel, $b_t(\omega)$ is a vector in $\hbox{I\kern-.18em\hbox{R}}^d$ and
$c_t(\omega)$ is a symmetric $d\times d$-matrix, for all $(\omega,\
t)\in\Omega\times \mathbb R_+$. The quadruplet
\begin{eqnarray*}\label{FpredictCharac}
(b,c, F, A)\ \mbox{are the predictable characteristics of}\ X.\end{eqnarray*}
 For more details about these {\it predictable characteristics} and other
related issues, we refer to \cite[Section II.2]{JS03}. For the sake of simplicity, we will consider models, that we call {\it $\sigma$-special}, defined as follows. 
\begin{definition}\label{SigmaSpecial}  The model $(X,\mathbb H)$ is called $\sigma$-special if there exists a real-valued and $\mathbb H$-predictable process $\varphi$ such that 
\begin{eqnarray}\label{sigmaSpecial}
0<\varphi\leq 1\quad \mbox{and}\quad \sum \varphi\vert\Delta X\vert I_{\{\vert\Delta X\vert>1\}}\in {\cal A}^+_{loc}(\mathbb H).
\end{eqnarray}
\end{definition}
It is clear that (\ref{sigmaSpecial}) is equivalent to $\int_{(\vert x\vert>1)}\vert x\vert F(dx)<+\infty$ $P\otimes A$-a.e., or to $\varphi\cdot X$ being a special semimartingale (i.e. $\sup_{0<s\leq\cdot}\vert \varphi\Delta X_s\vert\in{\cal A}^+_{loc}$), or equivalently . If $X$ is locally bounded, then it is $\sigma$-special.
\section{Main Result}\label{section3} 
This section states the main theorem of the paper, and discusses its relationship to the literature. To this end, we recall the set of admissible portfolios and deflators. Throughout the paper, we denote $\Theta(X,\mathbb H)$ the following set 
\begin{eqnarray}\label{LogOptimization} 
\Theta(X,\mathbb H):=\Bigl\{\theta\in L(X,\mathbb H)\big|\quad E\left[\max(0,-\ln( 1+(\theta\cdot X)_T)\right]<+\infty\Bigr\}.
\end{eqnarray}

\begin{definition}\label{DeflatorDefinition} Let $Z$ be a process. $Z$ is called a  deflator for $(X,\mathbb H)$ if $Z>0$ and $Z{\cal E}(\varphi\cdot X)$ is an $\mathbb H$-supermartingale, for any $\varphi\in L(X, \mathbb H)$ such that $\varphi\Delta X\geq -1$.\\ Throughout the paper, the set of all deflator for  $(X,\mathbb H)$ will be denoted by ${\cal D}(X,\mathbb H)$.
\end{definition}
\begin{theorem}\label{LemmaCrucial}
Suppose $(X,\mathbb H)$ is $\sigma$-special and quasi-left-continuous with predictable characteristics $\left(b,c,F, A\right)$. Then the following assertions are equivalent.\\
{\rm{(a)}} The set ${\cal D}_{log}(X,\mathbb H)$, given by 
\begin{eqnarray}\label{DsetLog}
{\cal D}_{log}(X,\mathbb H):=\left\{Z\in {\cal D}(X,\mathbb H)\quad \big|\quad E[-\ln(Z_T)]<+\infty\right\},
\end{eqnarray}
is not empty (i.e. ${\cal D}_{log}(X,\mathbb H)\not=\emptyset$).\\
{\rm{(b)}} There exists an $\mathbb H$-predictable process $\widetilde\varphi\in{\cal L}(X,\mathbb H)$ such that, for any $\varphi$ belonging to  ${\cal L}(X,\mathbb H)$,  the following hold 
\begin{eqnarray}
&&E\left[{\widetilde V}_T+{1\over{2}}(\widetilde\varphi^{tr}c\widetilde\varphi\cdot A)_T+(\int({{ -\widetilde\varphi^{tr}x}\over{1 +\widetilde\varphi^{tr}x }} +\ln(1 +\widetilde\varphi^{tr}x)) F(dx)\cdot A)_T\right]<+\infty ,\label{Condi11}\\
&& {\widetilde V}:=\Big\vert \widetilde\varphi^{tr}(b-c\widetilde\varphi)+\int \left[{{\widetilde\varphi^{tr}x}\over{1+\widetilde\varphi^{tr}x}}-\widetilde\varphi^{tr}h(x)\right] F(dx)\Big\vert\cdot A,\label{processV}\\
&&(\varphi-\widetilde\varphi)^{tr}(b-c\widetilde\varphi)+ \int \left( {{(\varphi-\widetilde\varphi)^{tr}x}\over{1+{\widetilde\varphi}^{tr}x}}-(\varphi-\widetilde\varphi)^{tr}h(x)\right)F(dx)\leq 0. \label{C6forX}
\end{eqnarray}
{\rm{(c)}} There exists a unique $\widetilde Z\in{\cal D}(X,\mathbb H)$ such that 
\begin{eqnarray}\label{dualSolution}
\inf_{Z\in{\cal D}(X,\mathbb H)}E[-\ln(Z_T)]=E[-\ln(\widetilde Z_T)]<+\infty.
\end{eqnarray}
{\rm{(d)}} There exists a unique $\widetilde\theta\in\Theta(X,\mathbb H)$ such that 
\begin{eqnarray}\label{PrimalSolution}
\sup_{\theta\in\Theta(X,\mathbb H)}E[\ln(1+(\theta\cdot X)_T)]=E[\ln(1+(\widetilde\theta\cdot X)_T)]<+\infty.
\end{eqnarray}
Furthermore,  when these assertions hold, the following hold. \begin{eqnarray}
&& \widetilde\varphi\in L(X^c,\mathbb H)\cap {\cal L}(X,\mathbb H),\quad \sqrt{((1+\widetilde\varphi^{tr}x)^{-1}-1)^2\star\mu}\in{\cal A}^+_{loc}(\mathbb H),\label{integrabilities}\\
&&{1\over{\widetilde Z}}={\cal E}(\widetilde\varphi\cdot X),\ \widetilde Z:={\cal E}(K-{\widetilde V}),\ K:=\widetilde\varphi\cdot X^c+{{-\widetilde\varphi^{tr}x}\over{1+\widetilde\varphi^{tr}x}}\star(\mu-\nu).\hskip 1cm\label{duality}\\
&&\widetilde\varphi=\widetilde\theta (1+(\widetilde\theta\cdot X)_{-})^{-1}\quad \rm{and}\quad \widetilde\theta=\widetilde\varphi{\cal E}_{-}(\widetilde\varphi\cdot X)\quad P\otimes A\rm{-a.e.}.\label{equalityFiTheta}\end{eqnarray}
\end{theorem}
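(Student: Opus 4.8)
The plan is to collapse both the primal problem \eqref{PrimalSolution} and the dual problem \eqref{dualSolution} onto a single pointwise (one-period, indexed by $(\omega,t)$) concave maximisation, and then to extract all four equivalences and the concluding identities from the convex analysis of that static problem.

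First I would install the bijection between holdings and proportions. Putting $V:=1+\theta\cdot X$ and $\varphi:=\theta V_{-}^{-1}$ one gets $V={\cal E}(\varphi\cdot X)$, with inverse $\theta=\varphi\,{\cal E}_{-}(\varphi\cdot X)$, which is \eqref{equalityFiTheta}; the positivity constraint $V>0$ turns into $\varphi^{tr}\Delta X\geq-1$, so $\ln(1+(\theta\cdot X)_T)=\ln{\cal E}_T(\varphi\cdot X)$. Expanding $\ln{\cal E}(\varphi\cdot X)$ by means of \eqref{modelSbis} and the definition of ${\cal E}$ gives $\ln{\cal E}(\varphi\cdot X)=N(\varphi)+g(\varphi)\cdot A$, where $N(\varphi):=\varphi\cdot X^c+\ln(1+\varphi^{tr}x)\star(\mu-\nu)$ is a local martingale (once the integrabilities below are in force) and the drift density is
\[
g(\varphi):=\varphi^{tr}b-\tfrac12\varphi^{tr}c\varphi+\int\bigl(\ln(1+\varphi^{tr}x)-\varphi^{tr}h(x)\bigr)F(dx).
\]
The map $\varphi\mapsto g(\varphi)$ is concave on the convex set $\{\varphi^{tr}x>-1\ F\text{-a.e.}\}$ and strictly concave in the directions transverse to the degeneracy of $(c,F)$.

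Second, I would record the weak-duality inequality. For admissible $\theta$ and any normalised $Z\in{\cal D}(X,\mathbb H)$, the process $Z(1+\theta\cdot X)=Z{\cal E}(\varphi\cdot X)$ is a nonnegative supermartingale, so $E[Z_T(1+(\theta\cdot X)_T)]\leq1$ and concavity of the logarithm yields $E[\ln(1+(\theta\cdot X)_T)]\leq E[-\ln Z_T]$; hence the primal value never exceeds the dual value, and in particular \eqref{DsetLog} being nonempty forces the primal value to be finite. The analytic heart is the pointwise maximisation of $g$: a predictable measurable-selection argument produces a predictable $\widetilde\varphi$ attaining $\sup_\varphi g(\varphi)$ for $P\otimes A$-a.e.\ $(\omega,t)$, characterised by the variational inequality $(\varphi-\widetilde\varphi)^{tr}\nabla g(\widetilde\varphi)\leq0$; substituting $\nabla g(\widetilde\varphi)=b-c\widetilde\varphi+\int(\tfrac{x}{1+\widetilde\varphi^{tr}x}-h(x))F(dx)$ turns this into exactly \eqref{C6forX}. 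Interchanging the pointwise supremum with the operation $Y\mapsto E[(Y\cdot A)_T]$ identifies the primal value with $E[(g(\widetilde\varphi)\cdot A)_T]$; the algebraic rearrangement $g(\widetilde\varphi)=\tilde v+\tfrac12\widetilde\varphi^{tr}c\widetilde\varphi+\int(\ln(1+\widetilde\varphi^{tr}x)-\tfrac{\widetilde\varphi^{tr}x}{1+\widetilde\varphi^{tr}x})F(dx)$, with $\tilde v:=\widetilde\varphi^{tr}(b-c\widetilde\varphi)+\int(\tfrac{\widetilde\varphi^{tr}x}{1+\widetilde\varphi^{tr}x}-\widetilde\varphi^{tr}h(x))F(dx)$ the quantity appearing inside the absolute value in \eqref{processV} and the last two summands nonnegative, shows that finiteness of this value is equivalent to \eqref{Condi11}. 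This gives (b)$\Leftrightarrow$(d), and the integrabilities \eqref{integrabilities} drop out of \eqref{Condi11} since the quadratic term controls $\widetilde\varphi\in L(X^c,\mathbb H)$ and the nonnegative jump term controls $\sqrt{((1+\widetilde\varphi^{tr}x)^{-1}-1)^2\star\mu}$.

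Third, for the duality I would define $\widetilde Z$ by \eqref{duality} and check it is a deflator. By Yor's formula the positive process $\widetilde Z\,{\cal E}(\varphi\cdot X)={\cal E}(\varphi\cdot X)/{\cal E}(\widetilde\varphi\cdot X)$ is the stochastic exponential of a local martingale plus a finite-variation part whose drift density is precisely the left member of \eqref{C6forX}; being strictly positive with nonpositive drift it is a supermartingale, for \emph{every} admissible $\varphi$, so $\widetilde Z\in{\cal D}(X,\mathbb H)$ --- this is why the pointwise inequality \eqref{C6forX} is the correct optimality condition. Choosing $\varphi=\widetilde\varphi$ gives $\widetilde Z\,{\cal E}(\widetilde\varphi\cdot X)=1$, i.e.\ $1/\widetilde Z={\cal E}(\widetilde\varphi\cdot X)$, so the weak-duality inequality is saturated and $E[-\ln\widetilde Z_T]=E[(g(\widetilde\varphi)\cdot A)_T]=E[\ln(1+(\widetilde\theta\cdot X)_T)]$; this proves (b)$\Rightarrow$(c) with no duality gap, while (c)$\Rightarrow$(a) is immediate. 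Uniqueness in (c) and (d) comes from strict concavity of $g$ transverse to its null directions, those directions being pinned down on the optimiser by the saturated inequality. I expect the main obstacle to be precisely the predictable selection of $\widetilde\varphi$ together with the passage $\sup\int=\int\sup$ and the proof that the pointwise optimiser lies in $L(X^c,\mathbb H)\cap{\cal L}(X,\mathbb H)$: one must exclude that the optimiser is attained only in a limiting sense or fails to be $X$-integrable, and it is here that $\sigma$-specialness and quasi-left-continuity (which makes $A$ continuous) enter decisively.
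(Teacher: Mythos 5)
There is a genuine gap, and it sits exactly where you yourself flag ``the main obstacle'': your plan names the hard step but never closes it, and in the unconditional form you state it, it is false. You assert that a measurable-selection argument ``produces'' a predictable $\widetilde\varphi$ attaining $\sup_{\lambda} g(\lambda)$ for $P\otimes A$-a.e.\ $(\omega,t)$. Without input from hypothesis (a) or (d) this attainment simply fails: if there is a direction $y$ with $cy=0$, $F(\{x:\ y^{tr}x<0\})=0$ and $-y^{tr}b+\int_{\{y^{tr}x>0\}}y^{tr}h(x)F(dx)<0$ (an arbitrage-type direction), then $g(\alpha y)\uparrow+\infty$ as $\alpha\to+\infty$ and no maximiser exists. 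The paper's proof of (a)$\Rightarrow$(b) consists precisely of the machinery you are missing: from a log-deflator $Z$ it extracts, via Theorem \ref{DlogSet}, a triplet $(\beta,f,V)$ whose drift inequality \eqref{C2} rules out such directions; this is what makes the recession cone of $L=-g$ coincide with its constancy directions, so that \cite[Theorem 27.1(b)]{Rockafellar} yields attainment, after which \cite[Proposition 1]{Evstigneev} together with Lemma \ref{measurabilitylemma} gives the predictable selection. Your proposal contains no substitute for this use of the dual hypothesis inside the pointwise problem.

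The second gap is the interchange $\sup_{\theta}E[\ln(1+(\theta\cdot X)_T)]=E[(g(\widetilde\varphi)\cdot A)_T]$ on which your claimed equivalence (b)$\Leftrightarrow$(d) rests. Writing $\ln{\cal E}(\varphi\cdot X)=N(\varphi)+g(\varphi)\cdot A$, the process $N(\varphi)$ is only a local martingale, bounded on neither side, so $E[N_T(\varphi)]$ need be neither zero nor nonpositive; hence the identification of the primal value with the expected integrated drift is unjustified, and with it your route from (d) back to (b) collapses. The paper never performs this interchange. For (b)$\Rightarrow$(d) it argues as in your third paragraph --- the supermartingale $\widetilde Z\,{\cal E}(\varphi\cdot X)$ whose drift is the left member of \eqref{C6forX}, plus Jensen --- and even there the identity $E[-\ln\widetilde Z_T]=E[(g(\widetilde\varphi)\cdot A)_T]$ at the optimum requires \eqref{Condi11} together with Lemma \ref{H0toH1martingales} to guarantee that the local-martingale part $K$ is a true martingale with $E[\sup_{t\le T}|K_t|]<+\infty$. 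The loop is then closed not by (d)$\Rightarrow$(b) but by (d)$\Rightarrow$(a), proved by an entirely different mechanism: finiteness of the primal value gives ${\cal D}(X,\mathbb H)\neq\emptyset$ by \cite[Theorem 2.8]{ChoulliDengMa}, hence a num\'eraire portfolio $\widehat\varphi$, and a supermartingale-with-constant-expectation argument applied to the strictly convex $x\mapsto x-\ln(1+x)$ forces ${\cal E}(\widetilde\varphi\cdot X)\equiv{\cal E}(\widehat\varphi\cdot X)$, so that $1/{\cal E}(\widetilde\varphi\cdot X)$ is a log-integrable deflator. Until you supply (i) an attainment argument for the pointwise problem that actually consumes hypothesis (a), and (ii) a replacement for the sup/expectation interchange (for instance the num\'eraire route just described), the equivalence of the four assertions is not established.
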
 
It is important to notice that, for any $Z\in {\cal D}(X,\mathbb H)$, we always have $E\ln^+(Z_T)\leq \ln(2)$. Furthermore, one can easily prove that the following two assertions are equivalent:\\
(a) $Z\in {\cal D}_{log}(X,\mathbb H)$ (i.e. $-\ln(Z_T)$ is integrable or equivalently  $(\ln(Z_T))^-$ is integrable),\\
(b) $\{-\ln(Z_t),\ 0\leq t\leq T\}$, or equivalently $\{(\ln(Z_t))^-,\ 0\leq t\leq T\}$, is uniformly integrable submartingale.\\ 
Besides this, for a positive local martingale $Z$, the condition $E[-\ln(Z_T)]<+\infty$ does not guarantee  that this $Z$ is a martingale, while it implies that $K:=Z_{-}^{-1}\cdot Z$ is a  martingale satisfying $E[\sup_{0\leq t\leq T}\vert K_t\vert]<+\infty$ instead, see Lemma \ref{H0toH1martingales} for this latter fact. As a result of this discussion, we conclude that Theorem \ref{LemmaCrucial} extends deeply the existing literature on the log-optimal portfolio by dropping the no-free-lunch-with-vanishing-risk condition on the model. This assumption is really a vital assumption for the analysis of \cite{GollKallsen}. This achievement is due to our approach that differs fundamentally from that of \cite{GollKallsen}, while it is inspired from the approach of \cite{ChoulliStricker2007} with a major difference. This difference lies in dropping all assumptions on the model $(X,\mathbb H)$ considered in \cite{ChoulliStricker2007}, which  guarantee that the minimizer of a functional belongs to the interior of its effective domain. We recall our aforementioned claim that the ``$\sigma$-special assumption" for $(X,\mathbb H)$ is purely technical and is not related at all to the minimizer of this functional.  In conclusion, our theorem establishes the duality, under basically no assumption, besides describing the optimal dual solution when it exists as explicit as possible. This, furthermore, proves that in general, this optimal deflator might not be a local martingale deflator.

\begin{remark}\label{remark3.3} It is clear that the process $V$ is well defined. This is due to  
\begin{eqnarray*}
\int [{{\widetilde\varphi^{tr}x}\over{1+\widetilde\varphi^{tr}x}}-\widetilde\varphi^{tr}h(x)] F(dx)&&=-\int_{(\vert x\vert\leq 1)} {{(\widetilde\varphi^{tr}x)^2}\over{1+\widetilde\varphi^{tr}x}}F(dx)\\
&&-\int_{(\vert x\vert> 1)} {{1}\over{1+\widetilde\varphi^{tr}x}}F(dx) +F(\vert x\vert >1),
 \end{eqnarray*}
which is a well defined integral with values in $[-\infty, +\infty)$. \\
 Similarly for the LHS term of (\ref{C6forX}), the integral term is well defined for any $\varphi\in{\cal L}(X,\mathbb H)$. Indeed, due to $\Omega\times [0,+\infty)=\cup_{n\geq 0}(\vert \varphi\vert\leq n)$, for any process $\varphi\in{\cal L}(X,\mathbb H)$, it is enough to prove that the integral term is well defined for bounded  $\varphi\in{\cal L}(X,\mathbb H)$. To this end, on the one hand, we write
 \begin{eqnarray*}
  \int \left( {{\varphi^{tr}x}\over{1+{\widetilde\varphi}^{tr}x}}-\varphi^{tr}h(x)\right)F(dx)&&=-\int_{(\vert x\vert\leq 1)} {{(\varphi^{tr}x)({\widetilde\varphi}^{tr}x)}\over{1+{\widetilde\varphi}^{tr}x}}F(dx)-F(\vert x\vert>1)\\
  &&\hskip -1cm+ \int _{(\vert x\vert> 1)}  {{\varphi^{tr}x+1}\over{1+{\widetilde\varphi}^{tr}x}}F(dx)+\int _{(\vert x\vert> 1)} {{{\widetilde\varphi}^{tr}x}\over{1+{\widetilde\varphi}^{tr}x}}F(dx).
  \end{eqnarray*}
  On the other hand, since $\varphi$ is $X$-integrable (as it is bounded), both processes $I_{\{\vert \Delta X\vert\leq 1\}}\cdot [K, \varphi\cdot X]$  and $[\sum I_{\{\vert\Delta X\vert>1\}},K]$ have  locally integrable variations and their compensators are  $$-\int_{(\vert x\vert\leq 1)} {{(\varphi^{tr}x)({\widetilde\varphi}^{tr}x)}\over{1+{\widetilde\varphi}^{tr}x}}F(dx)\cdot A\ \mbox{ and}\ -\int _{(\vert x\vert> 1)} {{{\widetilde\varphi}^{tr}x}\over{1+{\widetilde\varphi}^{tr}x}}F(dx)\cdot A$$ respectively. This proves that the integral is well defined with values in $(-\infty, +\infty]$.
  \end{remark}

 \section{Proof of Theorem \ref{LemmaCrucial}}\label{section4} 
 
\begin{proof} {\it of Theorem \ref{LemmaCrucial}.} It is clear that (c)$\Longrightarrow$(a) is  obvious, and hence the proof of the theorem reduces to proving   (a)$\Longrightarrow$(b)$\Longrightarrow$(c), (b)$\Longrightarrow$(d), (d)$\Longrightarrow$(a), and as long as assertion (b) holds the properties in (\ref{integrabilities})-(\ref{duality}) hold also. Thus, the rest of this proof is divided into three steps. The first step proves that assertion (b) implies both assertions (c) and (d) and (\ref{integrabilities})-(\ref{duality}). The second step deals with (d)$\Longrightarrow$ (a), while the third step addresses (a)$\Longrightarrow$ (b). \\
{\bf Step 1.} Here, we assume that assertion (b) holds, and focus on proving assertions (c) and (d), and  (\ref{integrabilities})-(\ref{duality}).  Then due to (\ref{Condi11}) and  
$$(1+y)\ln(1+y)-y\geq {{1-\delta}\over{2}}{{y^2}\over{1+y}} I_{\{\vert y\vert\leq \delta\}}+ {{\delta}\over{2(1+\delta)}}\vert y\vert I_{\{\vert y\vert> \delta\}}$$
for any $\delta\in(0,1)$ and any $y\geq -1$, we deduce that for $\delta\in (0,1)$ the following  
\begin{eqnarray*}
&&\widetilde\varphi ^{tr}c\widetilde\varphi \cdot A,\quad \int_{\mathbb R^d\setminus\{0\}} \left({{\widetilde\varphi^{tr} x }\over{1 +\widetilde\varphi^{tr} x }} \right)^2 I_{\{\vert \widetilde\varphi^{tr} x\vert\leq\delta\}}F(dx)\cdot A,\\
&&\mbox{and}\quad \int_{\mathbb R^d\setminus\{0\}} {{\vert \widetilde\varphi^{tr} x \vert }\over{1 +\widetilde\varphi^{tr} x }}  I_{\{\vert \widetilde\varphi^{tr} x\vert>\delta\}}F(dx)\cdot A
\end{eqnarray*}
 are integrable processes, and hence $ \widetilde\varphi\in L(X^c,\mathbb H)$ and $ \sqrt{((1+\widetilde\varphi^{tr}x)^{-1}-1)^2\star\mu}\in{\cal A}^+_{loc}$ due to Lemma \ref{lemm4F-Gintregrability} (see Appendix A). Hence  $K$, defined in (\ref{duality}), is a well defined local martingale satisfying $\Delta K+1=(1+\widetilde\varphi^{tr}\Delta X)^{-1}>0$. Furthermore, thanks to Yor's formula and the continuity of $A$, we conclude that for any bounded $\varphi \in {\cal L}(X,\mathbb H)$, 
\begin{eqnarray*}
{\cal E}(\varphi\cdot X){\widetilde Z}={\cal E}\left(\varphi\cdot X+[\varphi\cdot X, K]+K-{\widetilde V}\right).\end{eqnarray*}
It is easy to check that (\ref{processV}) and (\ref{C6forX}) imply that $\varphi\cdot X+[\varphi\cdot X, K]$ is a special semimartingale and its compensator $\left(\varphi\cdot X+[\varphi\cdot X, K]\right)^{p,\mathbb H}$ is dominated by ${\widetilde V}$. This proves that the process $\varphi\cdot X+[\varphi\cdot X, K]+K-{\widetilde V}$ is a local supermartingale. As a consequence, ${\cal E}(\varphi\cdot X){\widetilde Z}$ is a positive supermatingale, and hence ${\widetilde Z}\in {\cal D}(X,\mathbb H)$ on the one hand. On the other hand, due to It\^o, we derive 
$$-\ln({\widetilde Z})=\mbox{local martingale}+ {\widetilde V}+{1\over{2}}\widetilde\varphi^{tr}c\widetilde\varphi\cdot A+ \left[-{{\widetilde\varphi^{tr} x }\over{1 +\widetilde\varphi^{tr} x }}+\ln(1 +\widetilde\varphi^{tr} x)\right]\star\mu.$$ By combining this with (\ref{Condi11}), we deduce that  ${\widetilde Z}\in {\cal D}_{log}(X,\mathbb H)$. This proves that assertion (a) holds. Furthermore, since ${\widetilde Z}$ is a positive supermartingale, ${\widetilde Z}^{-1}$ is a positive semimartingale, and 
$${\widetilde Z}_{-}I_{\{\vert \widetilde\varphi\vert\leq n\}}\cdot ({\widetilde Z})^{-1}=\widetilde\varphi I_{\{\vert \widetilde\varphi\vert\leq n\}}\cdot X.$$ Since the LHS term, of the above equality converges (in probability at any time $t\in (0,T]$), we deduce that  $\widetilde\varphi\in L(X,\mathbb H)$ (i.e. it is $X$-integrable in the semimartingale sense), and 
$({\widetilde Z})^{-1}={\cal E}(\widetilde\varphi\cdot X).$
Therefore, on the one hand, this ends the proof for the properties (\ref{integrabilities})-(\ref{duality}). On the other hand, we notice that $\ln({\cal E}(\widetilde\varphi\cdot X)_T)=-\ln(\widetilde Z_T)$ is an integrable random variable, and  for any $\varphi \in {\cal L}(X,\mathbb H)\cap L(X,\mathbb H)$ satisfying the condition $E\ln^-({\cal E}(\varphi\cdot X)_T)<+\infty$, we get
$$
E[\ln({\cal E}(\varphi\cdot X)_T/{\cal E}(\widetilde\varphi\cdot X)_T)]=E[\ln({\cal E}(\varphi\cdot X)_T)- E\ln({\cal E}(\widetilde\varphi\cdot X)_T)]\leq 0.$$ Thus, assertion (d) and (\ref{equalityFiTheta}) follow, and the rest of this step proves assertion (c).\\
Let $Z\in {\cal D}_{log}(X,\mathbb H)$, and by applying Theorem \ref{DlogSet}, we deduce the existence of $(\beta, f, V)$ such that  
\begin{eqnarray*}
&&\varphi^{tr}x f(x)\geq -[f(x)-1-\ln(f(x))]+\ln(1+\varphi^{tr}x ),\quad\mbox{ for any}\ \varphi\in {\cal L}(X,\mathbb H),\\
&&V \succeq \left(\varphi^{tr}b +\varphi^{tr}c\beta+\int \left(\varphi^{tr}xf(x)-\varphi^{tr}h(x)\right)F(dx)\right)\cdot A,\\
&&E[-\ln(Z_T)]\geq E\left[V_T+{1\over{2}}\beta^{tr}c\beta\cdot A_T+\int [f(x)-1-\ln(f(x))]F(dx)\cdot A_T\right].
\end{eqnarray*}
Then by combining these properties (take $\varphi=\widetilde\varphi$) with (\ref{Condi11})-(\ref{processV})-(\ref{C6forX}) and the fact that under (\ref{C6forX}) we have $ \widetilde\varphi^{tr}(b-c\widetilde\varphi)+\displaystyle\int [{{\widetilde\varphi^{tr}x}\over{1+\widetilde\varphi^{tr}x}}-\widetilde\varphi^{tr}h(x)] F(dx)\geq 0$, we derive 
\begin{eqnarray*}
E[-\ln(\widetilde Z_T)]&&=E\left[ {\widetilde V}_T+{1\over{2}}(\widetilde\varphi ^{tr}c\widetilde\varphi \cdot A)_T+(-{{\widetilde\varphi^{tr}x}\over{1+\widetilde\varphi^{tr}x}}+\ln(1+(\widetilde\varphi^{tr}x) )\star\mu)_T\right]\\
&&=E\left[ {\widetilde V}_T+\left({1\over{2}}\widetilde\varphi ^{tr}c\widetilde\varphi+\int(-{{\widetilde\varphi^{tr}x}\over{1+\widetilde\varphi^{tr}x}}+\ln(1+(\widetilde\varphi^{tr}x) )F(dx)\right)\cdot A_T\right]\\
&&=E\left[(\widetilde\varphi^{tr}b-{1\over{2}}\widetilde\varphi^{tr}c\widetilde\varphi)\cdot A_T+\int(\ln(1+\widetilde\varphi^{tr}x)-\widetilde\varphi^{tr}h(x))F(dx)\cdot A_T\right]\\
&&\leq E\left[(\widetilde\varphi^{tr}b-{1\over{2}}\widetilde\varphi^{tr}c\widetilde\varphi)\cdot A_T+\int(\widetilde\varphi^{tr}xf(x)-\widetilde\varphi^{tr}h(x))F(dx)\cdot A_T\right]+\\
&&\hskip 1.5cm+E\left[\left(\int [f(x)-1-\ln(f(x))]F(dx)\right)\cdot A_T\right]\\
&&\leq E\left[\left(-\widetilde\varphi^{tr}c\beta-{1\over{2}}\widetilde\varphi^{tr}c\widetilde\varphi+\int [f(x)-1-\ln(f(x))]F(dx)\right)\cdot A_T+V_T\right]\\
&&\leq E\left[V_T+{1\over{2}}\beta^{tr}c\beta\cdot A_T+\int [f(x)-1-\ln(f(x))]F(dx)\cdot A_T\right]\\
&&\leq E[-\ln(Z_T)].
\end{eqnarray*}
This proves assertion (c), and the first step is complete.\\
{\bf Step 2.} This step proves (d)$\Longrightarrow$ (a). Thus, we suppose that assertion (d) holds. Then there exists  a portfolio ${\widetilde\theta}\in {\Theta(X,\mathbb H)}$ such that  (\ref{PrimalSolution}) holds. Thanks to \cite[Theorem 2.8]{ChoulliDengMa} (see also \cite{ChristensenLarsen2007}  and \cite[Theorem 2.3]{HulleySchweizer}), we deduce that  ${\cal D}(X,\mathbb H)\not=\emptyset$. By combining this with $1+({\widetilde\theta}\cdot X)_T>0$, we conclude the positivity of both processes $1+{\widetilde\theta}\cdot X$ and $1+({\widetilde\theta}\cdot X)_-$, and hence the existence of ${\widetilde\varphi}\in {\cal L}(X,\mathbb  H)\cap L(X, \mathbb H)$ such that  $1+{\widetilde\theta}\cdot X= {\cal E} ({\widetilde\varphi}\cdot X)$ on the one hand. On the other hand, the condition  ${\cal D}(X,\mathbb H)\not=\emptyset$ is equivalent to the existence of the num\'eraire portfolio, that we denote by $\widehat\varphi$ (see \cite{ChoulliDengMa,KabanovKardarasSong,KardarasKaratzas} and the references therein to cite few). This means that there exists $\widehat\varphi\in L(X,\mathbb H)$ such that ${\cal E}(\widehat\varphi\cdot X)>0$ and ${\cal E}(\varphi\cdot X)/{\cal E}(\widehat\varphi\cdot X)$ is a supermartingale for any $\varphi\in L(X,\mathbb H)$ with $1+\varphi\Delta X\geq 0$. In particular, the process 
$$M:= {{{\cal E}(\widetilde\varphi\cdot X)}\over{{\cal E}(\widehat\varphi\cdot X)}}-1,$$
is a suprermartingale. Due to $\ln(x)\leq x-1$, we get  
\begin{eqnarray}\label{inequality0}
-\ln({\cal E}(\widehat\varphi\cdot X))\leq -\ln({\cal E}(\widetilde\varphi\cdot X))+{{{\cal E}(\widetilde\varphi\cdot X)}\over{{\cal E}(\widehat\varphi\cdot X)}}-1,\end{eqnarray}
and deduce that $\ln^-({\cal E}_T(\widehat\varphi\cdot X))$ is integrable. As a result, $\widehat\theta:=\widehat\varphi{\cal E}(\widehat\varphi\cdot X)_{-}\in \Theta(X,\mathbb H)$, and the following hold
\begin{eqnarray}\label{inequality1}
E[\ln({\cal E}_T(\widehat\varphi\cdot X))]&&=E[\ln (1+(\widehat\theta\cdot X)_T)]\nonumber\\
&&\leq E[\ln (1+(\widetilde\theta\cdot X)_T)]=E[\ln({\cal E}_T(\widetilde\varphi\cdot X))].\end{eqnarray}
This, in particular, implies that $\ln({\cal E}_T(\widehat\varphi\cdot X))$ is an integrable random variable, or equivalently $\ln({\cal E}_T(\widetilde\varphi\cdot X)/{\cal E}_T(\widehat\varphi\cdot X)=\ln({\cal E}_T(\widetilde\varphi\cdot X))-\ln({\cal E}_T(\widehat\varphi\cdot X)$ is integrable. Then using Jensen's inequality, we deduce that $$E[\ln({\cal E}_T(\widetilde\varphi\cdot X)/{\cal E}_T(\widehat\varphi\cdot X)]\leq \ln(E[{\cal E}_T(\widetilde\varphi\cdot X)/{\cal E}_T(\widehat\varphi\cdot X)])\leq 0.$$This combined with (\ref{inequality1}) implies that 
\begin{eqnarray}\label{equality1}
E[\ln({\cal E}_T(\widehat\varphi\cdot X))]= E[\ln({\cal E}_T(\widetilde\varphi\cdot X))].\end{eqnarray}
A combination of this with (\ref{inequality0}) leads to $E[{{{\cal E}_T(\widetilde\varphi\cdot X)/{\cal E}_T(\widehat\varphi\cdot X)}}]=1$, and hence the process $M+1$ is in fact a martingale (a positive supermartingale with constant expectation  is a martingale). It is clear that $f(x):=x-\ln(1+x),\ x>-1$, is a nonnegative and strictly convex function that vanishes at $x=0$ only.  Since $E[f(M_T)]<+\infty$, we conclude that $f(M)$ is a nonnegative submartingale satisfying
$$0=E[f(M_0)]\leq E[f(M_t)]\leq  E[f(M_T)]=0,$$
where the last equality follows from combining  (\ref{equality1}) with the fact that $M$ is martingale. Thus, we conclude that $M\equiv 0$ and hence ${\cal E}(\widehat\varphi\cdot X)\equiv {\cal E}(\widetilde\varphi\cdot X)$. As a consequence the process $Z:=1/ {\cal E}(\widetilde\varphi\cdot X)$ belongs to ${\cal D}(X,\mathbb H)$. Therefore, assertion (a) follows immediately from this and 
$$E[-\ln({ Z}_T)]=E[\ln({\cal E}_T ({\widetilde\varphi}\cdot X))]=E[1+(\widetilde\theta\cdot X)_T]<+\infty,$$
and the proof of (d)$\Longrightarrow$ (a) is complete.\\
{\bf Step 3.} This step proves the implication (a) $\Longrightarrow$ (b). Hence, we assume that assertion (a) holds for the rest of this proof. In virtue of Theorem \ref{DlogSet}, which guarantees the existence of $(\beta, f, V)$ such that $\beta\in L(X^c,\mathbb H)$, $f$ is $\widetilde{\cal P}(\mathbb H)$-measurable, positive and $\sqrt{(f-1)^2\star\mu}\in {\cal A}^+_{loc}$,  $V$ is a predictable and nondecreasing process, and  the following hold for any bounded $\theta\in {\cal L}(X,\mathbb H)$. 
\begin{eqnarray}
&&E\left[V_T+{1\over{2}}(\beta^{tr}c\beta\cdot A)_T+\left(\int(f(x)-1-\ln(f(x)))F(dx)\right)\cdot A_T\right]\nonumber\\
&&\leq E[-\ln( Z_T)]<+\infty,\label{C0}\hskip 1cm\\
&&\left(\int \vert f(x)\theta^{tr}x-\theta^{tr}h(x)\vert F(dx)\right)\cdot A_T<+\infty\  P\mbox{-a.s.},\ \mbox{ and}\label{C1}\\
&&\left(\theta^{tr}b+\theta^{tr}c\beta+\int [f(x)\theta^{tr}x-\theta^{tr}h(x)] F(dx)\right)\cdot A\preceq V,\label{C2}
\end{eqnarray}
The rest of this proof  is divided into two sub-steps, and uses these properties. The first sub-step proves that a functional $L$, that we will define below, attains its minimal value, while the second sub-step proves that this minimum fulfills (\ref{Condi11})-(\ref{processV})-(\ref{C6forX}).\\
{\bf Step 3.a.}  Throughout the rest of the proof,  we denote by $L_{(\omega,t)}$ --$P\otimes A$-almost all $(\omega, t)\in \Omega\times [0,+\infty)$-- the function given by 
 \begin{eqnarray}\label{FunctionL}
L_{(\omega,t)}(\lambda):=-\lambda^{tr}b(\omega,t)+{1\over{2}}\lambda^{tr}c(\omega,t)\lambda +\int \left(\lambda^{tr}h(x)-\ln((1+\lambda^{tr}x)^+)\right)F_{(\omega,t)}(dx),\hskip 0.5cm
\end{eqnarray}
for any $\lambda\in{\mathbb R}^d$ with the convention $\ln(0^+)=-\infty$. This sub-step proves the existence of a predictable process $\widetilde{\varphi}$ such that $P\otimes A$-almost all $ (\omega,t)\in\Omega\times[0,+\infty)$
\begin{eqnarray}\label{minimizationpb}
\widetilde{\varphi}(\omega,t)\in {\cal L}_{(\omega,t)}(X,\mathbb H)\quad\rm{and}\quad L_{(\omega,t)}(\widetilde{\varphi}(\omega,t))=\min_{\lambda\in{\cal L}_{(\omega,t)}(X,\mathbb H)}  L_{(\omega,t)}(\lambda).\end{eqnarray}
To this end, we start by noticing that in virtue of a combination of Remark \ref{remark3.3} (which implies that this functional takes values in $(-\infty,+\infty]$), Lemma \ref{measurabilitylemma}, and \cite[Proposition 1]{Evstigneev} (which guarantees the existence of a predictable selection for the minimizer when it exists), this proof boils down to prove that $L_{(\omega, t)}$ attains in fact its minimum for all $(\omega, t)\in\Omega\times[0,+\infty)$. This is the aim of the rest of this sub-step. For the sake of simplicity, we denote $L_{(\omega,t)}(\cdot)$ by $L$ throughout the rest of this proof. In order to prove that $L$ attains it minimum value, we start by  proving that this function $L$ is convex, proper and closed. Let first recall some definitions from convex analysis. Consider a convex function $f$. The effective domain of $f$, denoted by dom($f$), is the set of all $x\in \mathbb R^d$  such that $f(x)<+\infty$. The function $f$ is said to be proper if, for any $x\in\mathbb R^d$, $f(x)>-\infty$ and if its effective domain dom($f$ ) is not empty. For all undefined or unexplained concepts from convex analysis, we refer the reader to Rockafellar \cite{Rockafellar}. \\
Let $\theta$  be a bounded element of ${\cal L}(X,\mathbb H)$, and due to  $\ln(1+(\theta^{tr}x)^+)\leq (\theta^{tr}x)^+ \leq \vert\theta\vert\vert x\vert$ and $\int_{(\vert x\vert>1)}\vert x
\vert F(dx)<+\infty$ (since $X$ is $\sigma$-special), we obtain $P\otimes A$-a.e.
\begin{eqnarray}\label{C1forX}
\int_{(\vert x\vert>1)}\ln(1+(\theta^{tr}x)^+) F(dx)\leq \int_{(\vert x\vert>1)} (\theta^{tr}x)^+ F(dx)\leq \vert\theta\vert\int_{(\vert x\vert>1)} \vert x\vert F(dx)<+\infty.
\end{eqnarray}
Then by combining this with 
\begin{eqnarray*}
\displaystyle\int \left(\lambda^{tr}h(x)-\ln(1+\lambda^{tr}x)\right)F(dx)\geq -\int_{(\vert x\vert> 1)}\ln(1+(\lambda^{tr}x)^{+})F(dx)>-\infty,\end{eqnarray*}
and $L(0)=0<+\infty$ (i.e. $0\in\mbox{dom}(L)\subset {\cal L}(X,\mathbb H)$), we deduce that $L$ is a convex and proper function. Now we prove that $L$ is closed or equivalently $L$ is lower semi-continuous. Let $\theta_n$ be a sequence in $\mathbb R^d$ that converges to $\theta$ such that $L(\theta_n)$ converges. Then it is clear that 
$\theta_n^{tr}b+\theta_n^{tr}c\beta$  converges to $\theta^{tr}b+\theta^{tr}c\beta$ and  $\int_{(\vert x\vert>1)}\ln(1+(\theta_n^{tr}x)^+) F(dx)$  converges to $\int_{(\vert x\vert>1)}\ln(1+(\theta^{tr}x)^+) F(dx)$. This latter is due to a combination of $\ln(1+(\theta_n^{tr}x)^+)\leq (\theta_n^{tr}x)^+\leq (\sup_n\vert\theta_n\vert) \vert x\vert$, (\ref{C1forX}), and dominated convergence theorem. Now consider the assumption 
\begin{equation}\label{condition4theta}
\theta\in{\cal L}(X,\mathbb H)\quad\mbox{and there exists $n_0$ such that for all}\  n\geq n_0\quad \theta_n\in{\cal L}(X,\mathbb H).
\end{equation}
Under (\ref{condition4theta}), by combining Fatou's lemma and the above remarks, we get
\begin{eqnarray*}
L(\theta)&&=-\theta^{tr}b+{1\over{2}}\theta^{tr}c\theta +\int \left(\theta^{tr}h(x)-\ln(1+\theta^{tr}x)\right)F(dx)\\
&&=-\theta^{tr}b+{1\over{2}}\theta^{tr}c\theta-\int_{\vert x\vert>1}\ln(1+(\theta^{tr}x)^+)F(dx)\\
&&-\int_{\vert x\vert>1}\ln(1-(\theta^{tr}x)^-)F(dx)+
\int_{\vert x\vert\leq 1}(\theta^{tr}x- \ln(1+\theta^{tr}x))F(dx)\\
&&\leq \lim_{n\longrightarrow+\infty}L(\theta_n).
\end{eqnarray*}
This proves that $L$ is closed under (\ref{condition4theta}) on the one hand. On the other hand, it is clear that, when  (\ref{condition4theta}) is violated, there exists a subsequence $(\theta_{k(n)})_n$ such that $\theta_{k(n)}\not\in{\cal L}(X,\mathbb H)$ for all $n\geq 1$. As a result, since $L(\theta_n)$ converges, we conclude that  $L(\theta)\leq \lim_{n\longrightarrow+\infty}L(\theta_n)= \lim_{n\longrightarrow+\infty}L(\theta_{k(n)})=+\infty$. 
This proves that $L$ is closed, convex and proper. Thus, we can apply \cite [Theorem 27.1(b)]{Rockafellar} which states that, for $L$ to attain its minimal value, it is sufficient to prove that the set of recession for $L$ is contained in the set of directions in which $L$ is constant. To check this last condition, we calculate the recession function for $L$. For $\lambda\in\mbox{dom}(L )$ and $y\in\mathbb R^d$, the recession function for $L$ is by definition
$$L0^+(y):=\lim_{\alpha\longrightarrow+\infty} {{L(\lambda+\alpha y)-L(\lambda)}\over{\alpha}}.$$
Consider the following sets
$$\Gamma^+(\lambda):=\{x\in\mathbb R^d\quad \big|\quad \lambda^{tr}x> 0\},\quad \Gamma^-(\lambda):=\{x\in\mathbb R^d\quad \big|\quad \lambda^{tr}x< 0\},$$
and remark that we have 
\begin{eqnarray*}
&&{{L(\lambda+\alpha y)-L(\lambda)}\over{\alpha}}\\
&&=-y^{tr}b+{{\alpha }\over{2}}y^{tr}cy+y^{tr}c\lambda+\int \left(y^{tr}h(x)-{1\over{\alpha}}\ln(1+{{\alpha y^{tr}x}\over{1+\lambda^{tr}x}})\right)F(dx)\\
&&=-y^{tr}b+{{\alpha }\over{2}}y^{tr}cy+y^{tr}c\lambda+\int_{\Gamma^+(y)} \left(y^{tr}h(x)-{1\over{\alpha}}\ln(1+{{\alpha y^{tr}x}\over{1+\lambda^{tr}x}})\right)F(dx)\\
&&+\int_{\Gamma^-(y)} \left(y^{tr}h(x)-{1\over{\alpha}}\ln(1+{{\alpha y^{tr}x}\over{1+\lambda^{tr}x}})\right)F(dx).
\end{eqnarray*}
Then, on the one hand, we calculate the recession function $L0^+(y)$ as follows.
\begin{eqnarray*}L0^+(y)=\left\{\begin{array}{lll} +\infty\hskip 4cm \mbox{if either}\ F(\Gamma^-(y))>0 \ \mbox{or}\ y^{tr}cy>0,\\
-y^{tr}b+\int_{\Gamma^+(y)} y^{tr}h(x)F(dx)\hskip 0.30cm  \mbox{otherwise}\end{array}\right.\end{eqnarray*}
On the other hand, we have 
\begin{eqnarray*}
&&\alpha \{y\in\mathbb R^d\ :\ cy=0\ \mbox{and}\ F(\Gamma^-(y))=0\}\subset {\cal L}(X,\mathbb H)\ \mbox{ for any}\ \alpha\in(0,+\infty),\\
&&-\alpha \{y\in\mathbb R^d\ :\ cy=0\ \mbox{and}\ F(\Gamma^+(y))=0\}\subset {\cal L}(X,\mathbb H)\ \mbox{ for any}\ \alpha\in(0,+\infty).
\end{eqnarray*}
 Thus, by combining these with (\ref{C2}), we deduce that 
\begin{eqnarray*}
&&-y^{tr}b+\int_{\Gamma^+(y)} y^{tr}h(x)F(dx)> 0\quad \mbox{if}\ F(\Gamma^-(y))=0<F(\Gamma^+(y)),\ cy=0,\\
&&y^{tr}b-\int_{\Gamma^-(y)} y^{tr}h(x)F(dx)> 0\quad \mbox{if}\ F(\Gamma^+(y))=0<F(\Gamma^-(y)),\ cy=0.
\end{eqnarray*}
Thus, thanks to these remarks, the recession cone for $L$ and the set of directions in which $L$ is constant, that we denote RC and CD respectively, are defined and calculated as follows
\begin{eqnarray*}
RC&&:=\{y\in{\mathbb  R}^d\ \big|\ L0^+(y)\leq 0\}=\{y\in\mathbb R^d\ \big|\ cy=y^{tr}b= F(\Gamma^-(y))=F(\Gamma^+(y))=0\}\\
CD&&:=\{y\in{\mathbb  R}^d\ \big|\ L0^+(y)\leq 0,\  L0^+(-y)\leq 0 \}\\
&&=\{y\in\mathbb R^d\ \big|\ y^{tr}b=cy=F(\Gamma^-(y))=F(\Gamma^+(y))=0\}.
\end{eqnarray*}
This proves that both sets (RC and CD) are equal. Hence, thanks to \cite [Theorem 27.1(b)]{Rockafellar}, we conclude that $L_{(\omega,t)}$ attains its minimal value at $\varphi(\omega,t)$  which satisfies  $1+x^{tr}  \varphi(\omega,t)>0$ $F_{(\omega,t)}(dx)$-a.e. since $L_{(\omega,t)}(  \varphi(\omega,t))\leq L_{(\omega,t)}(0)=0<+\infty$. This ends the first part of the third step.  \\
{\bf Step 3.b.} This sub-step proves that $\widetilde\varphi$, a minimizer for $L$ proved in the previous sub-step, fulfills in fact the conditions of assertion (b) (i.e. the properties (\ref{Condi11})-(\ref{processV})-(\ref{C6forX})).  \\
Since $L(\widetilde\varphi)\leq L(\varphi)$ for any $\varphi\in{\cal L}(X,\mathbb H)$.  Let  $\varphi\in{\cal L}(X,\mathbb H)$ and $\alpha\in(0,1)$, then using similar calculations as above, we get 
\begin{eqnarray*}
&&{{L(\widetilde\varphi)-L(\widetilde\varphi+\alpha(\varphi-\widetilde\varphi))}\over{\alpha}}=(\varphi-\widetilde\varphi)^{tr}b-{{\alpha }\over{2}}(\varphi-\widetilde\varphi)^{tr}c(\varphi-\widetilde\varphi)-(\varphi-\widetilde\varphi)^{tr}c\widetilde\varphi+\\
&&+\int \left({1\over{\alpha}}\ln(1+{{\alpha (\varphi-\widetilde\varphi)^{tr}x}\over{1+{\widetilde\varphi}^{tr}x}})-(\varphi-\widetilde\varphi)^{tr}h(x)\right)F(dx).
\end{eqnarray*}
It is clear that, as a function of $\alpha$, $\alpha^{-1}\ln(1+{{\alpha (\varphi-\widetilde\varphi)^{tr}x}\over{1+{\widetilde\varphi}^{tr}x}})$ is decreasing and hence
\begin{eqnarray*} \ln(1+ \varphi^{tr}x)- \ln(1+\widetilde\varphi^{tr}x)\leq {1\over{\alpha}}\ln(1+{{\alpha (\varphi-\widetilde\varphi)^{tr}x}\over{1+{\widetilde\varphi}^{tr}x}})\leq {{(\varphi-\widetilde\varphi)^{tr}x}\over{1+{\widetilde\varphi}^{tr}x}}.\end{eqnarray*}
As a result of this, combined with the convergence monotone theorem, we deduce that $$\int \left({1\over{\alpha}}\ln(1+{{\alpha (\varphi-\widetilde\varphi)^{tr}x}\over{1+{\widetilde\varphi}^{tr}x}})-(\varphi-\widetilde\varphi)^{tr}h(x)\right)F(dx)$$ converges to $\int[ {{(\varphi-\widetilde\varphi)^{tr}x}\over{1+{\widetilde\varphi}^{tr}x}}-(\varphi-\widetilde\varphi)^{tr}h(x)]F(dx),$ when $\alpha$ goes to zero  and hence (\ref{C6forX}) is proved. By using (\ref{C6forX})  for $\varphi=0$, and $L(\widetilde\varphi)\leq L(0)=0$, we get 
\begin{eqnarray}
&&0\leq \widetilde\varphi^{tr}b-\widetilde\varphi^{tr}c\widetilde\varphi+ \int \left(-\widetilde\varphi^{tr}h(x)+{{\widetilde\varphi^{tr}x}\over{1+{\widetilde\varphi}^{tr}x}}\right)F(dx)\label{Positive1}\\
&&0\leq \widetilde\varphi^{tr}b-{1\over{2}}\widetilde\varphi^{tr}c\widetilde\varphi+\int \left(\ln(1+\widetilde\varphi^{tr}x)-\widetilde\varphi^{tr}h(x)\right)F(dx).\label{Positive1}
\end{eqnarray}
Therefore, by combining these two inequalities with $-{1\over{2}}\widetilde\varphi^{tr}c\widetilde\varphi-{1\over{2}}\beta^{tr}c\beta \leq  \widetilde\varphi^{tr}c\beta$, and  $f(x)-1-\ln(f(x))\geq \ln(1+\widetilde\varphi^{tr}x)-f(x)\widetilde\varphi^{tr}x$ (Young's inequality), we derive 
\begin{eqnarray*}
&&\widetilde\varphi^{tr}b-{1\over{2}}\widetilde\varphi^{tr}c\widetilde\varphi-{1\over{2}}\beta^{tr}c\beta +\int [\ln(1+\widetilde\varphi^{tr}x)-\widetilde\varphi^{tr}h(x)] F(dx)+\\
&&\hskip 2.5cm -\int [f(x)-1-\ln(f(x))] F(dx) \\
&&\leq\widetilde\varphi^{tr}b-{1\over{2}}\widetilde\varphi^{tr}c\widetilde\varphi-{1\over{2}}\beta^{tr}c\beta +\int [f(x)\widetilde\varphi^{tr}x-\widetilde\varphi^{tr}h(x)] F(dx)\\
&&\leq \widetilde\varphi^{tr}b+\widetilde\varphi^{tr}c\beta+\int [f(x)\widetilde\varphi^{tr}x-\widetilde\varphi^{tr}h(x)] F(dx).
\end{eqnarray*}
Therefore, thanks to this latter inequality and (\ref{C2}),  we deduce that 
\begin{eqnarray*}
0&&\preceq\left( \widetilde\varphi^{tr}b-{1\over{2}}\widetilde\varphi^{tr}c\widetilde\varphi+\int [\ln(1+\widetilde\varphi^{tr}x)-\widetilde\varphi^{tr}h(x)] F(dx)\right)\cdot A\\
&&\preceq\left(\int [f(x)-1-\ln(f(x))] F(dx)+{1\over{2}}\beta^{tr}c\beta\right)\cdot A+V.
\end{eqnarray*}
By combining this, (\ref{C0}),  the fact that 
\begin{eqnarray*}
&&\widetilde\varphi^{tr}b-{1\over{2}}\widetilde\varphi^{tr}c\widetilde\varphi+\int [\ln(1+\widetilde\varphi^{tr}x)-\widetilde\varphi^{tr}h(x)] F(dx)\\
&&=
\left({1\over{2}}\widetilde\varphi^{tr}c\widetilde\varphi+\int \left(\ln(1+\widetilde\varphi^{tr}x)- {{\widetilde\varphi^{tr}x}\over{1+{\widetilde\varphi}^{tr}x}}\right)F(dx)\right)\\
&&\hskip 0.25cm+\left(\widetilde\varphi^{tr}b-\widetilde\varphi^{tr}c\widetilde\varphi+ \int \left(-\widetilde\varphi^{tr}h(x)+ {{\widetilde\varphi^{tr}x}\over{1+{\widetilde\varphi}^{tr}x}}\right)F(dx)\right),
\end{eqnarray*}
where both terms of the RHS are nonnegative, and the second term of this RHS coincides with ${{d\widetilde V}\over{dA}}$, we conclude that 
\begin{eqnarray*}
E\left[{\widetilde V}_T+\left({1\over{2}}\widetilde\varphi^{tr}c\widetilde\varphi+\int \left(\ln(1+\widetilde\varphi^{tr}x)- {{\widetilde\varphi^{tr}x}\over{1+{\widetilde\varphi}^{tr}x}}\right)F(dx)\right)\cdot A_T\right]<+\infty.
\end{eqnarray*}
This proves (\ref{Condi11}), and assertion (b) follows. This ends the proof of the theorem.\end{proof}



\appendix



\section{Some useful integrability properties}
The results of this section are new and are general,  not technical at all, and very useful, especially the first lemma and the proposition.

\begin{lemma}\label{H0toH1martingales} Consider $K\in {\cal M}_{0,loc}(\mathbb H)$ with $1+\Delta K>0$. 
If 
\begin{eqnarray}\label{FiniteCondition}
E[\langle K^c\rangle_T+\sum_{0<s\leq T}(\Delta K_s-\ln(1+\Delta K_s))]<+\infty,\end{eqnarray}
then $E[\sqrt{[K,K]_T}]<+\infty$ or equivalently $E[\displaystyle\sup_{0\leq t\leq T}\vert K_t\vert]<+\infty$.
\end{lemma}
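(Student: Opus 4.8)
The plan is to prove the first of the two (equivalent) conclusions, namely $E[\sqrt{[K,K]_T}]<+\infty$, and then recover $E[\sup_{0\le t\le T}|K_t|]<+\infty$ from it. The equivalence of these two finiteness statements is exactly the content of the Burkholder--Davis--Gundy inequality, which for a local martingale $K$ furnishes universal constants $c,C>0$ with $c\,E[\sqrt{[K,K]_T}]\le E[\sup_{0\le t\le T}|K_t|]\le C\,E[\sqrt{[K,K]_T}]$; so I would invoke BDG once at the end and concentrate the real work on the quadratic-variation estimate.

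The heart of the argument is a pointwise comparison of the nonnegative convex function $g(y):=y-\ln(1+y)$, $y>-1$ (which vanishes only at $0$), with $y^2$ near the origin and with $|y|$ far from it. I would establish two elementary inequalities: there exist constants $c_1,c_2>0$ such that $g(y)\ge c_1 y^2$ for all $y\in(-1,1]$, and $g(y)\ge c_2\,y$ for all $y>1$. The first follows from $g(y)/y^2\to 1/2$ as $y\to 0$, together with the continuity and strict positivity of $g$ on $(-1,1]$ and the blow-up $g(y)\to+\infty$ as $y\to-1^+$; the second from $g(y)/y\to 1$ as $y\to+\infty$. The key structural observation is that the hypothesis $1+\Delta K>0$ forces every jump with $|\Delta K_s|>1$ to be strictly positive, so the large-jump regime is handled entirely by the second (linear) bound.

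Next I would decompose $[K,K]_T=\langle K^c\rangle_T+\sum_{0<s\le T}(\Delta K_s)^2$ and split the jump sum over $\{|\Delta K_s|\le 1\}$ and $\{\Delta K_s>1\}$. Writing $J:=\sum_{0<s\le T}g(\Delta K_s)$, the assumption (\ref{FiniteCondition}) gives $E[\langle K^c\rangle_T]<+\infty$ and $E[J]<+\infty$. For the small jumps, $\sum_{|\Delta K_s|\le 1}(\Delta K_s)^2\le c_1^{-1}J$. For the large jumps I would pass to the $\ell^1$ norm via $\sqrt{\sum a_s^2}\le\sum|a_s|$, obtaining $\sqrt{\sum_{\Delta K_s>1}(\Delta K_s)^2}\le\sum_{\Delta K_s>1}\Delta K_s\le c_2^{-1}J$. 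Combining these through subadditivity of the square root yields $\sqrt{[K,K]_T}\le\sqrt{\langle K^c\rangle_T}+\sqrt{c_1^{-1}J}+c_2^{-1}J$, and taking expectations with Jensen's inequality ($E[\sqrt{X}]\le\sqrt{E[X]}$ applied to $\langle K^c\rangle_T$ and to $J$) gives $E[\sqrt{[K,K]_T}]<+\infty$.

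The step I expect to be the main obstacle, or at least the one easiest to get wrong, is the large-jump estimate: because $g$ grows only linearly at $+\infty$, one cannot control $\sum_{\mathrm{large}}(\Delta K_s)^2$ in $L^1$, and must instead take the square root first and use $\sqrt{\sum a^2}\le\sum|a|$ before applying the linear lower bound on $g$. Recognizing that the constraint $\Delta K>-1$ confines all large jumps to the positive half-line is precisely what keeps this step clean, since it removes any need for a separate linear estimate of $g$ near $-1$ (that region is already absorbed into the $y^2$-bound on $(-1,1]$).
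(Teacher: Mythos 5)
Your proof is correct and takes essentially the same route as the paper's: both arguments lower-bound $y-\ln(1+y)$ quadratically on small jumps and linearly on large ones (the paper with a threshold $\delta\in(0,1)$ and explicit constants, you with threshold $1$ and a compactness argument), then control $\sqrt{[K,K]_T}$ via the subadditivity $\sqrt{a+b+c}\le\sqrt{a}+\sqrt{b}+\sqrt{c}$ together with the $\ell^2\subset\ell^1$ bound $\sqrt{\sum a_s^2}\le\sum|a_s|$ on the large jumps, and finish with Jensen's inequality, the equivalence with $E[\sup_{0\le t\le T}|K_t|]<+\infty$ being the Davis/BDG inequality in both cases. The only cosmetic difference is that you invoke $1+\Delta K>0$ to confine large jumps to the positive half-line, whereas the paper's pointwise inequality is stated directly for $|\Delta K|>\delta$ and so needs no such observation.
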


\begin{proof}  Let $K\in {\cal M}_{0,loc}(\mathbb H)$ such that $1+\Delta K>0$ and (\ref{FiniteCondition}) holds. Then it is enough to remark that for $\delta\in (0,1)$, we have 
$$\Delta K-\ln(1+\Delta K)\geq {{\delta\vert \Delta K\vert }\over{\max(2(1-\delta),1+\delta^2)}}I_{\{\vert \Delta K\vert >\delta\}}+{{(\Delta K)^2}\over{1+\delta}}  I_{\{\vert \Delta K\vert \leq \delta\}}.$$
By using this inequality and  (\ref{FiniteCondition}), on the one hand, we deduce that
\begin{eqnarray*}
&&E\left[\langle K^c\rangle_T+\sum_{0<t\leq T}\vert \Delta K_t\vert  I_{\{\vert \Delta K_t\vert >\delta\}}+\sum_{0<t\leq T}(\Delta K)^2  I_{\{\vert \Delta K\vert \leq \delta\}}\right]\\
&&\leq C_{\delta} E\left[\langle K^c\rangle_T+\sum_{0<s\leq T}(\Delta K_s-\ln(1+\Delta K_s))\right] < +\infty,\end{eqnarray*}
where $C_{\delta}:=1+\delta +\max(2(1-\delta),1+\delta^2)/\delta$.  On the other hand, it is clear that 
$$[K,K]^{1/2}_T\leq \sqrt{\langle K^c\rangle}+\sum_{0<t\leq T}\vert \Delta K_t\vert  I_{\{\vert \Delta K_t\vert >\delta\}}+\sqrt{\sum_{0<t\leq T}(\Delta K)^2  I_{\{\vert \Delta K\vert \leq \delta\}}}.$$
This ends the proof of the lemma.
\end{proof}
\begin{lemma}\label{lemm4F-Gintregrability}
Let $\lambda\in{\cal L}(X,\mathbb H)$, and $\delta\in(0,1)$ such that 
\begin{eqnarray}
{{\vert\lambda^{tr}x\vert}\over{1+ \lambda^{tr}x}} I_{\{\vert\lambda^{tr}x\vert>\delta\}}\star\mu+\left({{\lambda^{tr}x}\over{1+ \lambda^{tr}x}}\right)^2 I_{\{\vert\lambda^{tr}x\vert\leq\delta\}}\star\mu\in{\cal A}^+_{loc}(\mathbb H).\end{eqnarray}
Then $\sqrt{((1+ \lambda^{tr}x)^{-1}-1)^2\star\mu}\in {\cal A}^+_{loc}(\mathbb H)$.
\end{lemma}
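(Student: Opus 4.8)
The plan is to reduce everything to a single domination estimate after rewriting the integrand. First I would record the algebraic identity $(1+\lambda^{tr}x)^{-1}-1=-\lambda^{tr}x/(1+\lambda^{tr}x)$, so that, setting $g:=\lambda^{tr}x/(1+\lambda^{tr}x)$, the conclusion to be proved is exactly $\sqrt{g^2\star\mu}\in{\cal A}^+_{loc}(\mathbb H)$, because $((1+\lambda^{tr}x)^{-1}-1)^2=g^2$. For the integrands in the hypothesis to be well defined one needs $1+\lambda^{tr}x>0$ $F(dx)$-a.e. (guaranteed by $\lambda\in{\cal L}(X,\mathbb H)$), and under this sign condition $|g|=|\lambda^{tr}x|/(1+\lambda^{tr}x)$. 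Hence the hypothesis reads precisely $H:=|g|I_{\{|\lambda^{tr}x|>\delta\}}\star\mu+g^2 I_{\{|\lambda^{tr}x|\leq\delta\}}\star\mu\in{\cal A}^+_{loc}(\mathbb H)$.

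Next I would split the jump sum according to whether $|\lambda^{tr}x|>\delta$ or $|\lambda^{tr}x|\leq\delta$ and apply subadditivity of the square root, $\sqrt{a+b}\leq\sqrt a+\sqrt b$, to get $\sqrt{g^2\star\mu}\leq\sqrt{g^2 I_{\{|\lambda^{tr}x|>\delta\}}\star\mu}+\sqrt{g^2 I_{\{|\lambda^{tr}x|\leq\delta\}}\star\mu}$. On the large-jump part I would bound the $\ell^2$ sum of the jumps by their $\ell^1$ sum, that is $\sqrt{\sum_s g_s^2 I_{\{|\lambda^{tr}\Delta X_s|>\delta\}}}\leq\sum_s|g_s|I_{\{|\lambda^{tr}\Delta X_s|>\delta\}}$, which gives $\sqrt{g^2 I_{\{|\lambda^{tr}x|>\delta\}}\star\mu}\leq|g|I_{\{|\lambda^{tr}x|>\delta\}}\star\mu$. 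On the small-jump part I would use the elementary bound $\sqrt b\leq\frac12(1+b)$, giving $\sqrt{g^2 I_{\{|\lambda^{tr}x|\leq\delta\}}\star\mu}\leq\frac12+\frac12\,g^2 I_{\{|\lambda^{tr}x|\leq\delta\}}\star\mu$.

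Combining the two estimates yields the pointwise-in-$t$ inequality $\sqrt{g^2\star\mu}\leq\frac12+H$. Since the left-hand process is nondecreasing, starts at $0$, and is dominated by $\frac12+H$, I would conclude by localization: taking a sequence $(T_n)$ with $T_n\uparrow\infty$ and $E[H_{T_n}]<+\infty$ (which exists because $H\in{\cal A}^+_{loc}(\mathbb H)$) gives $E[\sqrt{(g^2\star\mu)_{T_n}}]\leq\frac12+E[H_{T_n}]<+\infty$, whence $\sqrt{g^2\star\mu}\in{\cal A}^+_{loc}(\mathbb H)$, which is the claim.

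The only genuinely delicate point is the large-jump estimate. There the hypothesis controls only the first power $|g|$ and not $g^2$, which is unavoidable since $g$ is unbounded as $\lambda^{tr}x\to-1$; it is precisely the square root appearing in the conclusion that allows one to pass from the $\ell^2$ norm of the large jumps to their $\ell^1$ norm and thereby invoke the first term of $H$. The small-jump regime, where $g$ is comparable to $\lambda^{tr}x$ and the squares are summable, is handled trivially by $\sqrt b\le(1+b)/2$. Everything else is routine bookkeeping with $\star\mu$ and the localization defining ${\cal A}^+_{loc}(\mathbb H)$.
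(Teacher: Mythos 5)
Your proof is correct and follows essentially the same route as the paper: the identity $(1+\lambda^{tr}x)^{-1}-1=-\lambda^{tr}x/(1+\lambda^{tr}x)$, the split $\sqrt{a+b}\leq\sqrt{a}+\sqrt{b}$ across the threshold $\delta$, and the $\ell^2$-to-$\ell^1$ bound on the large jumps are exactly the paper's argument. The only difference is that you spell out the final step (the bound $\sqrt{b}\leq\frac{1}{2}(1+b)$ on the small-jump term and the localization) which the paper compresses into ``the lemma follows immediately from the latter inequality.''
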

\begin{proof}
By using $\sqrt{\sum_i x_i^2}\leq \sum_i \vert x_i\vert$, we derive 
\begin{eqnarray*}
&&\sqrt{((1+ \lambda^{tr}x)^{-1}-1)^2\star\mu}=\sqrt{\sum \left({{ \lambda^{tr}\Delta X}\over{1+\lambda^{tr}\Delta X}}\right)^2}\\
&&\leq \sqrt{\sum {{ (\lambda^{tr}\Delta X)^2}\over{(1+\lambda^{tr}\Delta X)^2}}I_{\{\vert  \lambda^{tr}\Delta X\vert\leq \delta\}}}+\sum {{ \vert\lambda^{tr}\Delta X\vert}\over{1+\lambda^{tr}\Delta X}}I_{\{\vert  \lambda^{tr}\Delta X\vert> \delta\}}.
\end{eqnarray*}
Thus, the lemma follows immediately from the latter inequality.
\end{proof}
\section{Martingales and deflators via predictable characteristics}
For the  following representation theorem, we refer to
\cite[Theorem 3.75]{J79}  and to \cite[Lemma 4.24]{JS03}.

\begin{theorem}\label{tmgviacharacteristics} Suppose that $X$ is quasi-left-continuous, and let $N\in {\cal M}_{0,loc}(\mathbb H)$. Then, there exist $\phi\in L(X^c,\mathbb H)$, $N'\in {\cal M}_{0,loc}(\mathbb H)$ with
$[N',X]=0$ and functionals $f\in {\widetilde{{\cal P}}}(\mathbb H)$ and $g\in
{\widetilde{{\cal O}}}(\mathbb F)$ such that the following hold.\\
{\rm{(a)}} $
 \Bigl (\displaystyle\sum_{s=0}^t (f(s, \Delta S_s )-1)^2
I_{\{\Delta S_s\not = 0\}}\Bigr )^{1/2}$ and $\Bigl (\displaystyle\sum_{s=0}^t g(s, \Delta S_s )^2
I_{\{\Delta S_s\not = 0\}}\Bigr )^{1/2}$ belong to ${\cal A}^+_{loc}$.\\
{\rm{(b)}}  $
M^P_{\mu}(g\ |\ {\widetilde {{\cal P}}})=0,$ $P\otimes\mu$-.a.e., and the process $N$ is given by \begin{equation}
\label{Ndecomposition}
 N=\phi\cdot X^c+(f-1)\star(\mu-\nu)+g\star\mu+{N'}.
\end{equation} \end{theorem}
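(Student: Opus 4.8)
The plan is to assemble the four pieces of the stated decomposition one at a time and then recombine them. First I would split $N=N^c+N^d$ into its continuous and purely discontinuous local martingale parts, and treat $N^c$ by the Galtchouk--Kunita--Watanabe projection onto $X^c$. Concretely, using $\langle X^{c,i},X^{c,j}\rangle=c^{ij}\cdot A$, I would solve $d\langle N^c,X^{c,i}\rangle/dA=(c\phi)^{i}$ for a predictable $\phi$ by a measurable selection on the range of the symmetric matrix $c$. This produces $\phi\in L(X^c,\mathbb H)$ together with a continuous orthogonal remainder ${}^{\perp}N^c:=N^c-\phi\cdot X^c$ satisfying $\langle {}^{\perp}N^c,X^c\rangle=0$, which will later be absorbed into $N'$.

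The purely discontinuous part is governed by its jumps. On the thin optional set $\{\Delta X\neq 0\}$ a measurable section argument produces a functional $U\in\widetilde{\mathcal O}(\mathbb H)$ with $\Delta N=U(\cdot,\Delta X)$ there. Since $N$ is a local martingale its square function $[N,N]^{1/2}$ lies in $\mathcal A^+_{loc}$, so $(U^2\star\mu)^{1/2}\le[N,N]^{1/2}\in\mathcal A^+_{loc}$, that is $U\in G_{loc}(\mu)$; I would then set $N^d_1:=U\star(\mu-\nu)$, a local martingale, and $N^d_2:=N^d-N^d_1$, whose jumps charge only $\{\Delta X=0\}$. Next I would define the predictable projection $f-1:=M^P_{\mu}(U\mid\widetilde{\mathcal P})$ and the centered part $g:=U-M^P_{\mu}(U\mid\widetilde{\mathcal P})$, so that $g\in\widetilde{\mathcal O}(\mathbb H)$ and $M^P_{\mu}(g\mid\widetilde{\mathcal P})=0$ by construction. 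Quasi-left-continuity enters here in an essential way: it forces $A$, hence $\nu$, to be continuous in time, so that $\nu(\{s\}\times\mathbb R^d)=0$; this removes the predictable time-atoms from the $G_{loc}(\mu)$-criterion and guarantees $\Delta[(f-1)\star(\mu-\nu)+g\star\mu]=(f-1)(\cdot,\Delta X)+g(\cdot,\Delta X)=\Delta N$ exactly on $\{\Delta X\neq 0\}$.

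The technical heart, and the step I expect to be the \emph{main obstacle}, is to show that the predictable projection $f-1=\widehat U$ of an element $U\in G_{loc}(\mu)$ is again in $G_{loc}(\mu)$, i.e. that $((f-1)^2\star\mu)^{1/2}\in\mathcal A^+_{loc}$. Under quasi-left-continuity the $G_{loc}(\mu)$-criterion collapses to exactly this square-function condition, and I would establish it by combining the conditional Jensen inequality $(\widehat U)^2\le M^P_{\mu}(U^2\mid\widetilde{\mathcal P})$ with a truncation of $U$ into small and large jumps, controlling the small jumps through their $\nu$-compensator and the large jumps through the finite-variation bound issuing from $(U^2\star\mu)^{1/2}\in\mathcal A^+_{loc}$; this is precisely the estimate underlying the classical representations of \cite[Theorem 3.75]{J79} and \cite[Lemma 4.24]{JS03}. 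Once $f-1\in G_{loc}(\mu)$ is known, $(f-1)\star(\mu-\nu)$ is a genuine local martingale, and then $g\star\mu=N^d_1-(f-1)\star(\mu-\nu)$ is automatically a local martingale whose $\widetilde{\mathcal P}$-compensator $g\star\nu$ vanishes; condition (a) then comes for free, since $(f-1)(\cdot,\Delta X)$ and $g(\cdot,\Delta X)$ are now the jumps of local martingales and hence have locally integrable square functions.

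Finally I would set $N':={}^{\perp}N^c+N^d_2$ and read off (b). By construction $N'$ is a local martingale in $\mathcal M_{0,loc}(\mathbb H)$ whose jumps charge only $\{\Delta X=0\}$ and whose continuous martingale part is ${}^{\perp}N^c$, orthogonal to $X^c$; hence $[N',X]=\langle {}^{\perp}N^c,X^c\rangle+\sum\Delta N'\,\Delta X=0$. Collecting $\phi\cdot X^c$, $(f-1)\star(\mu-\nu)$, $g\star\mu$ and $N'$ then reproduces the stated decomposition (\ref{Ndecomposition}), which completes the argument.
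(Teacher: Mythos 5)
Your proposal is correct and takes the same route as the paper's source for this statement: the paper itself offers no proof, referring instead to \cite[Theorem 3.75]{J79} and \cite[Lemma 4.24]{JS03}, and your argument is a faithful reconstruction of exactly that classical proof --- Galtchouk--Kunita--Watanabe projection of $N^c$ onto $X^c$, extraction of an optional functional $U$ with $\Delta N=U(\cdot,\Delta X)$ on $\{\Delta X\neq 0\}$, the split of $U$ into its $\widetilde{\cal P}(\mathbb H)$-projection $f-1$ and the centered remainder $g$, with quasi-left-continuity removing the time-atoms of $\nu$ so that $G_{\loc}(\mu)$ reduces to the square-function condition. You also correctly identify the one genuinely delicate step, namely that $f-1=M^P_{\mu}(U\,|\,\widetilde{\cal P}(\mathbb H))$ again lies in $G_{\loc}(\mu)$, and your treatment of it (conditional Jensen combined with the small-jump/large-jump truncation, the large-jump part having locally integrable variation, which in passing is what makes the conditional expectation well defined) is precisely the estimate on which the cited proofs rest.
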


The following theorem describes how general deflators can be characterized using the predictable characteristics. A version of this theorem can be found in \cite{Ma}.
\begin{theorem}\label{DlogSet} Suppose $X$ is quasi-left-continuous. 
$Z\in {\cal D}_{log}(X,\mathbb H)$ if and only if  there exists a triplet  $(\beta, f, V)$ such that $\beta\in L(X^c,\mathbb H)$, $f$ is $\widetilde{\cal P}(\mathbb H)$-measurable, positive and $\sqrt{(f-1)^2\star\mu}$ belongs to ${\cal A}^+_{loc}(\mathbb H)$,  $V$ is an $\mathbb H$-predictable and nondecreasing process, and  the following hold for any bounded process $\theta\in {\cal L}(X,\mathbb H)$.
\begin{eqnarray}
&&Z={\cal E}\Bigl(\beta\cdot X^c+(f-1)\star(\mu-\nu)\Bigr)\exp(-V),\label{Cond0}\\
&&E\left[V_T+\left({1\over{2}}\beta^{tr}c\beta+\int(f(x)-1-\ln(f(x)))F(dx)\right)\cdot A_T\right]\leq E[-\ln( Z_T)],\hskip 0.8cm\\
&&\left(\int \vert f(x)\theta^{tr}x-\theta^{tr}h(x)\vert F(dx)\right)\cdot A_T<+\infty\quad P\mbox{-a.s.}\label{Cond1}\\
&&\left(\theta^{tr}b+\theta^{tr}c\beta+\int [f(x)\theta^{tr}x-\theta^{tr}h(x)] F(dx)\right)\cdot A\preceq V,\label{Cond2}
\end{eqnarray}
\end{theorem}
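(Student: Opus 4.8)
The plan is to prove the two implications of the equivalence separately, in each direction translating the (super)martingale requirements built into $\mathcal{D}_{log}(X,\mathbb{H})$ into pointwise statements on the predictable characteristics $(b,c,F,A)$ via the canonical decomposition (\ref{modelSbis}), Yor's formula, and the representation Theorem \ref{tmgviacharacteristics}. Throughout, positivity of $Z$ will be carried by strict positivity of the jump functional $f$, and the supermartingale inequality will appear as the sign of a predictable drift computed against $A$.

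For the sufficiency direction I would start from a triplet $(\beta,f,V)$ as in the statement and set $N:=\beta\cdot X^c+(f-1)\star(\mu-\nu)$, so that (\ref{Cond0}) reads $Z=\mathcal{E}(N)\exp(-V)$; since $X$ is quasi-left-continuous, $A$ and the drift $V$ may be taken continuous, whence $Z=\mathcal{E}(N-V)$. Positivity is immediate because $f>0$ forces $1+\Delta N=f(\Delta X)>0$. For the deflator property, by localisation it suffices to treat bounded $\theta\in\mathcal{L}(X,\mathbb{H})$ with $1+\theta^{tr}\Delta X\ge 0$; Yor's formula gives $Z\mathcal{E}(\theta\cdot X)=\mathcal{E}(N-V+\theta\cdot X+[N,\theta\cdot X])$, whose jumps equal $f(\Delta X)(1+\theta^{tr}\Delta X)\ge 0$ and whose finite-variation compensator, computed through (\ref{modelSbis}), is $(\theta^{tr}b+\theta^{tr}c\beta+\int[f(x)\theta^{tr}x-\theta^{tr}h(x)]F(dx))\cdot A-V$; this is nonincreasing by (\ref{Cond2}), the integral being well defined by (\ref{Cond1}), so $Z\mathcal{E}(\theta\cdot X)$ is a nonnegative local supermartingale, hence a supermartingale. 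Finally an It\^o expansion yields $-\ln Z=-N+V+\tfrac12\beta^{tr}c\beta\cdot A+(f-1-\ln f)\star\mu$, and combining the integrability of the triplet with the entropy estimate shows $E[-\ln Z_T]<+\infty$, i.e. $Z\in\mathcal{D}_{log}(X,\mathbb{H})$.

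The necessity direction is the substantial one. Given $Z\in\mathcal{D}_{log}(X,\mathbb{H})$, I would take the stochastic logarithm $\ell:=Z_-^{-1}\cdot Z$, which is a special semimartingale because $Z$ is a positive supermartingale, and write $\ell=N-V$ with $N\in\mathcal{M}_{0,loc}(\mathbb{H})$ and $V$ predictable nondecreasing; quasi-left-continuity of $X$ renders $V$ continuous, so $Z=\mathcal{E}(N)\exp(-V)$. Theorem \ref{tmgviacharacteristics} then represents $N=\beta\cdot X^c+(f-1)\star(\mu-\nu)+g\star\mu+N'$ with $\beta\in L(X^c,\mathbb{H})$, a $\widetilde{\mathcal{P}}(\mathbb{H})$-measurable $f$ with $\sqrt{(f-1)^2\star\mu}\in\mathcal{A}^+_{loc}(\mathbb{H})$, a centered optional $g$ with $M^P_\mu(g\,|\,\widetilde{\mathcal{P}})=0$, and $N'$ with $[N',X]=0$; positivity $1+\Delta N>0$ gives $f+g>0$, and the $\widetilde{\mathcal{P}}(\mathbb{H})$-projection yields $f>0$. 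The drift inequality (\ref{Cond2}) follows by computing, for bounded $\theta$, the finite-variation compensator of $Z\mathcal{E}(\theta\cdot X)$ and using $[N',\theta\cdot X]=0$ together with the fact that $g\star\mu$ is a local martingale, while (\ref{Cond1}) records the integrability making that drift meaningful. For the entropy estimate I would expand $-\ln Z_T=-N_T+V_T+\tfrac12\langle N^c\rangle_T+\sum_{s\le T}(\Delta N_s-\ln(1+\Delta N_s))$; the last three terms are nonnegative and $E[-\ln Z_T]<+\infty$, so Lemma \ref{H0toH1martingales} applies and shows $N$ is a uniformly integrable martingale with $E[N_T]=0$, whence $E[-\ln Z_T]=E[V_T+\tfrac12\beta^{tr}c\beta\cdot A_T+\sum_{s\le T}(\Delta N_s-\ln(1+\Delta N_s))]$. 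A conditional Jensen inequality for the concave logarithm, together with $M^P_\mu(g\,|\,\widetilde{\mathcal{P}})=0$ and the nonnegativity of the $N'$-jump contribution, bounds the jump sum below by $(\int(f(x)-1-\ln f(x))F(dx))\cdot A_T$, giving the required entropy inequality; since the orthogonal pieces $g$ and $N'$ only enlarge $E[-\ln Z_T]$, they are discarded to reach the reduced form (\ref{Cond0}).

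The hardest point is precisely the passage, inside the necessity direction, from the bare finiteness $E[-\ln Z_T]<+\infty$ to genuine martingality of the local martingale $N$: as stressed in the discussion after Theorem \ref{LemmaCrucial}, log-integrability of a positive local martingale does not by itself make it a martingale, and it is Lemma \ref{H0toH1martingales} --- converting finiteness of $E[\langle N^c\rangle_T+\sum_{s\le T}(\Delta N_s-\ln(1+\Delta N_s))]$ into $E[\sup_{t\le T}|N_t|]<+\infty$ --- that justifies taking expectations and identifying $E[-\ln Z_T]$ with the characteristic-based entropy. The secondary difficulty is the clean reduction of the orthogonal components $g$ and $N'$ through conditional Jensen, ensuring both that the entropy inequality points the right way and that the deflator property is faithfully captured by the continuous-plus-$f$ data alone.
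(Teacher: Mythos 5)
Your proposal is correct and follows essentially the same route as the paper's proof of Theorem \ref{DlogSet}: the multiplicative decomposition $Z=\mathcal{E}(N)\exp(-V)$ of the positive supermartingale, Jacod's representation (Theorem \ref{tmgviacharacteristics}) of $N$, Yor's formula to translate the supermartingale requirement on $Z\mathcal{E}(\theta\cdot X)$ into the drift conditions (\ref{Cond1})--(\ref{Cond2}), and the It\^o expansion of $-\ln Z$ combined with conditional Jensen (via $M^P_\mu(g\,|\,\widetilde{\mathcal{P}})=0$) to discard the orthogonal pieces $g\star\mu$ and $N'$ and obtain the entropy inequality --- in fact you spell out the two steps the paper leaves implicit, namely the use of Lemma \ref{H0toH1martingales} to pass from $E[-\ln Z_T]<+\infty$ to true martingality of $N$, and the Jensen projection justifying ``$N$ can be chosen to be $\beta\cdot X^c+(f-1)\star(\mu-\nu)$''. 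Your one cosmetic slip --- claiming quasi-left-continuity of $X$ forces $V$ to be continuous (a deflator may still jump at predictable times) --- mirrors the paper's own shortcut of writing $V=v\cdot A$ and is harmless, since $A$ is continuous so the drift in (\ref{Cond2}) only constrains the continuous part of $V$, while predictable jumps of $V$ only reinforce the supermartingale property.
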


\begin{proof}
 Let $Z\in {\cal D}_{log}(X,\mathbb H)$, then $Z_{-}^{-1}\cdot Z$ a local supermartingale (which follows from $Z\in {\cal D}(X,\mathbb H)$ only). Hence, there exists a local martingale $N$ and a nondecreasing and predictable process $V$ such that $Z={\cal E}(N)\exp(-V)$. Then we derive
$$-\ln(Z)=-N+V+{1\over{2}}\langle N^c\rangle+\sum (\Delta N-\ln(1+\Delta N)).$$
Thus $Z\in {\cal D}_{log}(X,\mathbb H)$ if and only if $V+{1\over{2}}\langle N^c\rangle+\sum (\Delta N-\ln(1+\Delta N))$ is integrable. Then there exists a positive and ${\widetilde {\cal P}}(\mathbb H)$-measurable functional $f$ such that $\sqrt{(f-1)^2\star\mu}$ is locally integrable, and $\beta\in L(X^c,\mathbb H)$ such that $N$ can be chosen to be $N:=\beta\cdot X^c+(f-1)\star(\mu-\nu)$ and $V=v\cdot A$. Then  $Z\in {\cal D}_{log}(X,\mathbb H)$  if and only if $V+{1\over{2}}\beta^{tr}c\beta\cdot A+ (f-1-\ln(f))\star\nu\in  {\cal A}^+(\mathbb H)$ and $Z{\cal E}(\theta\cdot X)$ is a supermartingale, for any locally bounded $\mathbb H$-predictable process $\theta$ such that $1+\theta^{tr}x>0$ $P\otimes A$-a.e.. Here $(b, c, \nu:=F\otimes A)$ is the predictable characteristics of $(X,\mathbb H)$. \\
On the one hand, we have $Z{\cal E}(\theta\cdot X)={\cal E}(N-v\cdot A+\theta\cdot X+[\theta\cdot X, N])$ is a positive supermartingale and hence $N-v\cdot A+\theta\cdot X+[\theta\cdot X, N]$ is a local supermartingale. This is equivalent, (after simplification and transformation), to the conditions (\ref{Cond1})-(\ref{Cond2}). This ends the proof of theorem. \end{proof}

\section{A measurability result}
\begin{lemma}\label{measurabilitylemma}  Consider the triplet $(\Omega\times[0,+\infty),{\cal P}(\mathbb H), P\otimes A)$, and $L(\omega,t, \lambda):=L_{(\omega,t)}(\lambda)$, defined in (\ref{FunctionL}) for any $\lambda\in {\mathbb R}^d$ and any $(\omega,t)\in\Omega\times[0,+\infty)$. Then the functional $L$, as map $(\omega,t,\lambda)\longrightarrow L(
\omega,t,\lambda)$, is ${\cal P}(\mathbb H)\times{\cal B}({\mathbb R}^d)$-measurable.
\end{lemma}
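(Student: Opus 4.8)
The plan is to split $L(\omega,t,\lambda)=L_{(\omega,t)}(\lambda)$ into its three summands and establish joint measurability of each with respect to the product $\sigma$-field ${\cal P}(\mathbb H)\otimes{\cal B}({\mathbb R}^d)$, the only genuinely nontrivial one being the integral term. Write $L=L_1+L_2+L_3$ with $L_1(\omega,t,\lambda):=-\lambda^{tr}b(\omega,t)$, $L_2(\omega,t,\lambda):={1\over 2}\lambda^{tr}c(\omega,t)\lambda$, and $L_3(\omega,t,\lambda):=\int(\lambda^{tr}h(x)-\ln((1+\lambda^{tr}x)^+))F_{(\omega,t)}(dx)$. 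Since $b$ and $c$ are $\mathbb H$-predictable, each of $L_1,L_2$ is ${\cal P}(\mathbb H)$-measurable in $(\omega,t)$ for fixed $\lambda$ and is a polynomial, hence continuous, in $\lambda$ for fixed $(\omega,t)$. A function with these two properties (a Carath\'eodory function) is ${\cal P}(\mathbb H)\otimes{\cal B}({\mathbb R}^d)$-measurable: approximating $\lambda$ by a countable dense net and using continuity exhibits $L_i$ as a pointwise limit of jointly measurable functions. This disposes of $L_1$ and $L_2$.

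The heart of the argument is $L_3$. Set $g(\lambda,x):=\lambda^{tr}h(x)-\ln((1+\lambda^{tr}x)^+)$ with the convention $\ln(0^+)=-\infty$. As $(\lambda,x)\mapsto 1+\lambda^{tr}x$ is continuous, $(\cdot)^+$ is continuous, and $\ln$ is continuous on $(0,\infty)$ and extended by $-\infty$ at $0$, the map $g$ is Borel on ${\mathbb R}^d\times{\mathbb R}^d$ with values in $(-\infty,+\infty]$; viewed as a function of $((\omega,t),\lambda,x)$ that is constant in $(\omega,t)$, it is $({\cal P}(\mathbb H)\otimes{\cal B}({\mathbb R}^d))\otimes{\cal B}({\mathbb R}^d)$-measurable. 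I would then regard $F$ as a transition kernel $\widetilde F$ from $(\Omega\times[0,+\infty)\times{\mathbb R}^d,\ {\cal P}(\mathbb H)\otimes{\cal B}({\mathbb R}^d))$ into $({\mathbb R}^d,{\cal B}({\mathbb R}^d))$ that ignores the $\lambda$-coordinate, $\widetilde F_{(\omega,t,\lambda)}(dx):=F_{(\omega,t)}(dx)$; this is a legitimate kernel because $F$ is a predictable kernel, so $(\omega,t)\mapsto F_{(\omega,t)}(B)$ is ${\cal P}(\mathbb H)$-measurable for each $B\in{\cal B}({\mathbb R}^d)$, whence $(\omega,t,\lambda)\mapsto\widetilde F_{(\omega,t,\lambda)}(B)$ is ${\cal P}(\mathbb H)\otimes{\cal B}({\mathbb R}^d)$-measurable. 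The standard measurability theorem for parametrized integrals now applies: for any nonnegative jointly measurable integrand $G$, the map $(\omega,t,\lambda)\mapsto\int G\,d\widetilde F_{(\omega,t,\lambda)}$ is ${\cal P}(\mathbb H)\otimes{\cal B}({\mathbb R}^d)$-measurable. This is the usual monotone-class argument, verified first on $G=I_{D\times B}$ with $D\in{\cal P}(\mathbb H)\otimes{\cal B}({\mathbb R}^d)$ and $B\in{\cal B}({\mathbb R}^d)$, extended to simple functions by linearity and to general nonnegative $G$ by monotone convergence.

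I would apply this to the nonnegative jointly measurable functions $g^+:=\max(g,0)$ and $g^-:=\max(-g,0)$, obtaining that $(\omega,t,\lambda)\mapsto\int g^+\,d\widetilde F$ and $(\omega,t,\lambda)\mapsto\int g^-\,d\widetilde F$ are $[0,+\infty]$-valued and ${\cal P}(\mathbb H)\otimes{\cal B}({\mathbb R}^d)$-measurable. Then $L_3$ is the image of this pair under the Borel map $\Phi:[0,+\infty]^2\to[-\infty,+\infty]$, $\Phi(a,b):=a-b$, with the convention $\Phi(+\infty,+\infty):=+\infty$ matching the convention adopted for $L$; being a composition of measurable maps, $L_3$ is jointly measurable. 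Remark \ref{remark3.3} together with the $\sigma$-special assumption, which gives $\int_{(\vert x\vert>1)}\vert x\vert F(dx)<+\infty$ and hence $\int g^-\,d\widetilde F<+\infty$ for $P\otimes A$-almost all $(\omega,t)$, guarantees that this formal difference agrees with the genuine $(-\infty,+\infty]$-valued integral of (\ref{FunctionL}) off a ${\cal P}(\mathbb H)$-measurable $P\otimes A$-null set. Finally $L=L_1+L_2+L_3$ is a sum of ${\cal P}(\mathbb H)\otimes{\cal B}({\mathbb R}^d)$-measurable functions, hence ${\cal P}(\mathbb H)\otimes{\cal B}({\mathbb R}^d)$-measurable, which is the assertion.

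I expect the main obstacle to be the careful handling of the extended-real-valued, sign-changing integrand $g$: invoking the monotone-class measurability result only through its nonnegative parts $g^+,g^-$, controlling the negative part via the $\sigma$-special bound so that the $\infty-\infty$ ambiguity is confined to a measurable null set, and checking that the subtraction map $\Phi$ is Borel so that the composition remains measurable. By contrast, the $L_1$ and $L_2$ terms are routine.
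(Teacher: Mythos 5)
Your proof is correct, but it follows a genuinely different route from the paper's. The paper never invokes the kernel monotone-class machinery: it introduces a one-parameter family $L_{\delta}$, $\delta\in(0,1)$, obtained by replacing the integrand of (\ref{FunctionL}) with $f_{\delta}(\lambda,x):=\delta\lambda^{tr}h(x)-\ln\left(1-\delta+\delta(1+\lambda^{tr}x)^+\right)$, shows via two-sided bounds and dominated convergence that each $L_{\delta}$ is continuous in $\lambda$, hence a Carath\'eodory function and jointly measurable by \cite[Lemma 4.51]{Aliptrantis}, and then proves $L_{\delta}(\omega,t,\lambda)\longrightarrow L(\omega,t,\lambda)$ as $\delta$ goes to one, by monotone convergence with a case distinction on whether $F\left(\{\lambda^{tr}x\leq -1\}\right)$ vanishes; measurability of $L$ then follows as a pointwise limit of jointly measurable functions. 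You, by contrast, use the Carath\'eodory argument only for the routine drift and diffusion terms, and attack the jump integral head-on: Borel integrand $g$, the predictable kernel $F$ lifted to a kernel ignoring the $\lambda$-coordinate, the parametrized-integral measurability theorem applied to $g^{\pm}$, and the extended-real subtraction $\Phi$ with the convention $\Phi(+\infty,+\infty)=+\infty$. Each approach buys something: yours is standard measure theory and is more explicit than the paper about the delicate point that the integral in (\ref{FunctionL}) is unambiguous only off the ${\cal P}(\mathbb H)$-measurable $P\otimes A$-null set where $\int_{(\vert x\vert>1)}\vert x\vert F(dx)=+\infty$ --- the paper handles this only implicitly, by considering $L_{(\omega,t)}$ for $P\otimes A$-almost all $(\omega,t)$, and its Step 2 rearrangement of $\int f_{\delta}\,dF$ tacitly uses the same a.e.\ integrability; the paper's regularization avoids kernel arguments entirely, needing only continuity plus pointwise limits, at the cost of the ad hoc $f_{\delta}$ and the limit analysis. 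One detail you should make explicit: passing in your monotone-class step from rectangles $D\times B$ to general sets of the product $\sigma$-field requires the kernel to be $\sigma$-finite, so that complements can be treated on an exhausting sequence; this holds here uniformly, since the normalization $\int(\vert x\vert^2\wedge 1)F_t(dx)\leq 1$ yields $F(\vert x\vert>\varepsilon)\leq \varepsilon^{-2}$ for $\varepsilon\in(0,1]$, so $F$ is finite on each $\{\vert x\vert>1/n\}$ and monotone convergence completes the extension --- a one-line fix, not a gap.
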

\begin{proof}  The proof of the lemma will be achieved in two steps. The first step defines a family of functionals $\{L_{\delta}(\omega,t,\cdot),\ \delta\in (0,1)\}$ for $(\omega,t)\in\Omega\times[0,+\infty)$, and proves that these functionals are indeed ${\cal P}(\mathbb H)\times{\cal B}({\mathbb R}^d)$-measurable (i.e. jointly measurable in $(\omega,t)$ and $\lambda$). Then the second step proves that $L_{\delta}(\omega, t,\lambda)$ converges to $L(\omega,t,\lambda)$ when $\delta$ goes to one for any $(\omega,t,\lambda)\in\Omega\times[0,+\infty)\times{\mathbb R}^d$.\\
{\bf Step 1:} Let $(\omega,t)\in\Omega\times[0,+\infty)$ and $\delta\in(0,1)$. Then for  all $\lambda\in{\mathbb R}^d$, put
\begin{eqnarray*}
&&L_{\delta}(\omega, t,\lambda):=-\lambda^{tr}b(\omega,t)+{1\over{2}}\lambda^{tr}c(\omega,t)\lambda +\int_{\mathbb R^d}f_{\delta}(\lambda,x)F_{(\omega,t)}(dx),\\
&& f_{\delta}(\lambda,x):=\delta\lambda^{tr}h(x)-\ln(1-\delta+\delta(1+\lambda^{tr}x)^+).\end{eqnarray*}
It is clear that for any $\lambda\in{\mathbb R}^d$, $L_{\delta}(\omega, t,\lambda)$ is predictable. Thus, in order to prove that $L_{\delta}$ is jointly measurable (i.e. ${\cal P}(\mathbb H)\times{\cal B}({\mathbb R}^d)$-measurable), it is enough to prove that this functional is continuous in $\lambda$ (in this case our functional $L_{\delta}$ falls into the class of Carath\'eodory functions), and hence one can conclude  immediately that it is jointly measurable due to \cite[Lemma 4.51]{Aliptrantis}. Thus, the rest of this step focuses on proving that $L_{\delta}$ is continuous in $\lambda$.  To this end, we first remark that $-\lambda^{tr}b(\omega,t)+{1\over{2}}\lambda^{tr}c(\omega,t)\lambda$ is continuous, and we derive
\begin{eqnarray*}
&&-\delta\vert\lambda\vert^2\vert x\vert^2\leq f_{\delta}(\lambda,x)\leq \max({1\over{2(1-\delta)^2}},-\delta-\ln(1-\delta))\vert\lambda\vert^2\vert x\vert^2\ \mbox{on}\ \{\vert x\vert\leq 1\}\\
&&-\delta\vert\lambda\vert\vert x\vert\leq f_{\delta}(\lambda,x)\leq -\ln(1-\delta)\quad \mbox{on}\ \{\vert x\vert> 1\}.
\end{eqnarray*}
Therefore, thanks to the dominated convergence theorem and these inequalities, we deduce that in fact $L_{\delta}$ is continuous in $\lambda$, and the first step is complete.\\
{\bf Step 2:}  Herein, we prove that for any $(\omega,t)\in\Omega\times[0,+\infty)$ and any $\lambda\in{\mathbb R}^d$, $L_{\delta}(\omega,t,\lambda)$ converges to $L(\omega,t,\lambda)$ when $\delta$ goes to one. To this end, we first write 
\begin{eqnarray*}
&&L_{\delta}(\omega,t,\lambda)=\delta \int_{\{\lambda^{tr}x\leq -1\}}\lambda^{tr}h(x)F(dx)-\delta \int_{\{ \lambda^{tr}x\leq -1\}}\lambda^{tr}x I_{\{\vert x\vert>1\}}F(dx)\\
&&-\ln(1-\delta)F\left(\{\lambda^{tr}x\leq -1\}\right)+\int I_{\{\lambda^{tr}x> -1\}}\left(\delta\lambda^{tr}x-\ln(1+\delta\lambda^{tr}x)\right)F(dx).\hskip 1cm
\end{eqnarray*}
Remark that $\int_{\{\lambda^{tr}x\leq -1\}}\lambda^{tr}h(x)F(dx)$ and $\int_{\{ \lambda^{tr}x\leq -1\}}\lambda^{tr}x I_{\{\vert x\vert>1\}}F(dx)$ are well defined and take finite values, while $I_{\{\lambda^{tr}x> -1\}}\left(\delta\lambda^{tr}x-\ln(1+\delta\lambda^{tr}x)\right)$ is nonnegative and increasing in $\delta$. By distinguishing the cases whether $F\left(\{\lambda^{tr}x\leq -1\}\right)$ is null or not, thanks to the convergence monotone theorem, we conclude that $L_{\delta}(\omega,t,\lambda)$ converges to $L(\omega,t,\lambda)$. This ends the second step and the proof of the lemma.
\end{proof}
{\bf Acknowledgements:} This research is fully supported financially by the
Natural Sciences and Engineering Research Council of Canada, through Grant G121210818. \\ The authors would like to thank  Safa Alsheyab, Ferdoos Alharbi, Jun Deng, Monique Jeanblanc, Youri Kabanov and Michele Vanmalele  for several comments, fruitful discussions on the topic, and/or for providing important and useful references.   


\end{document}